\newtheorem{theorem}{Theorem}
\newtheorem{lemma}{Lemma}
\theoremstyle{definition}
\newtheorem{definition}{Definition}
\title{Scalable Edge Blocking Algorithms for Defending Active Directory Style Attack Graphs}
\author{Mingyu Guo, Max Ward, Aneta Neumann, Frank Neumann, Hung Nguyen
\\
\\
    1. School of Computer Science, University of Adelaide, Australia\\
    \{mingyu.guo, aneta.neumann, frank.neumann, hung.nguyen\}@adelaide.edu.au\\
    \\
    2. School of Physics, Maths and Computing, Computer Science and Software Engineering\\
    University of Western Australia, Australia\\
    {\em and}\\
    Department of Molecular and Cellular Biology\\
Harvard University, Cambridge, Massachusetts, USA\\
    max.ward@uwa.edu.au
}
\begin{document}

\maketitle

\begin{abstract}
Active Directory (AD) is the default security management system for Windows
    domain networks.  An AD environment naturally describes an
    attack graph where nodes represent computers/accounts/security groups, and edges represent existing accesses/known exploits that allow the attacker
    to gain access from one node to another.  Motivated by practical AD use
    cases, we study a Stackelberg game between one attacker and one defender.
    There are multiple entry nodes for the attacker to choose from and there is
    a single target (Domain Admin).  Every edge has a failure rate.  The
    attacker chooses the attack path with the maximum success rate.  The
    defender can block a limited number of edges (i.e., revoke accesses) from a set of blockable edges, limited by budget. The defender's aim is to minimize the
    attacker's success rate.

We exploit the
    tree-likeness of practical AD graphs to design scalable algorithms.
We propose two novel methods that combine theoretical fixed parameter analysis
    and practical optimisation techniques.

For graphs with small tree widths, we propose a tree decomposition based
    dynamic program.  We then propose a general method for converting tree
    decomposition based dynamic programs to reinforcement learning
    environments, which
    leads to an anytime algorithm that scales better, but loses the optimality
    guarantee.

For graphs with small numbers of non-splitting paths (a parameter we invent
    specifically for AD graphs), we propose a kernelization technique that
    significantly downsizes the model, which is then solved via
    mixed-integer programming.

Experimentally, our algorithms scale to handle synthetic AD graphs
    with tens of thousands of nodes.

\end{abstract}

%\keywords{Attack Graph, Fixed-Parameter Tractability, Stackelberg Equilibrium, Edge Interdiction, Shortest Path}

\section{Introduction}

Active Directory (AD) is the {\em default} system for managing access
and security in Windows domain networks.
Given its prevalence among large and small organisations
worldwide, Active Directory has become a major target by cyber attackers.\footnote{Enterprise Management Associates~\cite{EnterpriseManagementAssociatesEMA2021:Rise} found that 50\% of organizations surveyed had experienced an AD attack since 2019.}
An AD environment naturally describes a {\em cyber attack graph}
--- a conceptual model for describing the causal relationship of cyber events.
In an AD graph, the nodes are computers/user
accounts/security groups. An edge from node A to node B represents that an
attacker can gain access from A to B via an existing access or a known exploit.
Unlike many attack graph models that are of theoretical interest only,\footnote{\cite{Lallie2020:reviewa} surveyed over
$180$ attack graphs/trees from academic literatures on cyber security.}
AD attack graphs are actively
being used by real attackers and IT admins. Several software tools
(including both open source and commercial software)
have been
developed for scanning, visualizing and analyzing AD graphs. Among
them, one prominent tool is called {\sc
BloodHound},
%\footnote{https://github.com/BloodHoundAD/BloodHound}
% Motivated by
% \cite{Dunagan2009:Heatray}, {\sc BloodHound}
which models the {\em identity snowball
attack}. In such an attack, the attacker starts from a low-privilege account,
which is called the attacker's entry node (i.e., obtained via {\em phishing}
emails).  The attacker then travels from one node to another, where the end
goal is to reach the highest-privilege account called the {\sc Domain Admin
(DA)}.
% A typical {\em privilege escalation} pathway is as follows.
% An attacker starting from account A accesses computer B via an existing
% access. The attacker can then scan computer B's memory to obtain account C's password, provided that C has logged on to B.
% \begin{footnotesize}
% \[\texttt{Account A}\xrightarrow[\texttt{Existing Access}]{\texttt{AdminTo}} \texttt{Computer B}
% \xrightarrow[\texttt{Scan Memory}]{\texttt{HasSession}} \texttt{Account C}\]
% \end{footnotesize}

Given an entry node, {\sc BloodHound}
generates a {\em shortest attack path} to {\sc DA}, where
a path's distance is equal to the number of hops (fewer hops implies less chance of failure/being detected).  Before the invention of
{\sc BloodHound}, attackers used personal experience and heuristics to explore the attack graph,
hoping to reach {\sc DA} by chance.
{\sc BloodHound} makes it easier to attack Active Directory.
% \footnote{There is even a
% software tool for automatically following through the shortest attack path produced by {\sc BloodHound}, by performing the necessary hacks along the way. https://github.com/GoFetchAD/GoFetch}

Besides the attackers, defenders also study Active Directory attack graphs.
The original paper that motivated {\sc BloodHound}~\cite{Dunagan2009:Heatray}
proposed a heuristic for blocking edges of Active Directory attack graphs. The
goal is to cut the attack graph into multiple disconnected regions, which
would prevent the attacker from reaching {\sc DA}.
% (as long as the attacker's
% entry node does not share the same connected component as {\sc DA}).
In an
AD environment, edge blocking is achieved by revoking accesses or
introducing monitoring.
% For example, the defender can change certain accesses
% to {\em on-demand} that is subject to manual approval.
% {\sc
% ImproHound}\footnote{https://github.com/improsec/ImproHound} is another
% software tool that is designed to provide guidance on edge blocking. It generates a table enumerating
% all edges that lead to ``tier violations'' (i.e., a regular employee account
% having access to a production server).  The IT admin can then go through the
% table and decide whether or not to block the edges mentioned in the table.  The
% drawback of {\em graph cut based defensive heuristics} is that it is often not
% realistic to expect that we can cut {\sc DA} away from the main body of the graph.  For example, for
% large attack graphs, {\sc ImproHound} may produce a huge list of edges to block, which is
% not realistic to implement.
% There are more complications in practice. First,
{\em Not all edges are blockable}.
Some accesses are required for the organisation's normal operations.
{\em Blocking is also costly} (i.e., auditing may be needed before blocking an edge).
%If blocking is achieved via switching to on-demand, then
%there is the introduced cost for handling manual approvals.

We study how to {\em optimally} block a limited number of edges in order to
minimize a strategic attacker's success rate (chance of success) for reaching {\sc DA}.  In our
model, we assume that different edges have different {\em failure
rates}.
% For example, in Active Directory, one edge type is {\sc
% AdminTo}, which allows account A to access computer B using an existing access.
% An attacker mostly likely will not fail by travelling through this edge.\footnote{However, there is still a tiny chance of failure. For example, an account
% accessing too many computers or unusual accesses may trigger alarms.}
% Another edge type is called {\sc HasSession}, which requires that the
% target account has recently logged on to a designated computer, so such an edge has a higher
% failure rate.
The defender can block a {\em blockable} edge to increase the
failure rate of that edge (from its original failure rate) to $100\%$. The defender can block at most $b$
edges, where $b$ is the defensive budget. For example, if it takes 1 hour
for an IT admin to block an edge (auditing, reporting, implementation)
and one eight-hour day is dedicated to AD cleanup, then $b=8$.

We study both {\em pure} and {\em
mixed} strategy blocking. A pure strategy blocks $b$ edges
deterministically.  A mixed strategy specifies multiple sets of $b$
edges, and a distribution over the sets.  We follow the standard
Stackelberg game model by assuming that the attacker can observe the defender's
strategy and play a best response. For mixed strategy blocking, the attacker
can only observe the probabilities, not the actual realizations.
There is a set of entry
nodes for the attacker to choose from.
% \footnote{For example, in practice, there
% is an attacking method called {\em Kerberoasting}.  Once an attacker enters the
% internal network, the attacker can use {\sc BloodHound} to identify a list of
% {\em Kerberoastable} users. These are accounts that can be hacked via {\em
% offline} password cracking (therefore not possible to be detected or stopped by
% the defender). The attacker can pick any Kerberoastable user account as the
% entry node.}
The attacker's strategy specifies an entry node and from it an attack path to {\sc DA}.
The attacker's goal is to maximize the success rate by choosing the best path.

The pure strategy version of our model can be reduced to the {\em
single-source single-destination shortest path edge interdiction} problem,
% (also called the {\em most vital arcs} problem),
which is known to be
NP-hard~\cite{Bar-noy1995:Complexity}.  However, NP-hardness on general graphs
does not rule out efficient algorithms for practical AD
graphs.  {\em Active Directory style attack graphs exhibit special graph structures
and we can exploit these structures to derive scalable algorithms.}
We adopt {\em fixed-parameter analysis}.
% , which focuses on
% identifying easy-to-solve {\bf instances} of NP-hard problems
% % (described by a list of special parameters that are assumed to be small),
% and on deriving
% efficient algorithms that specifically exploit the features that characterize
% easy instances.
% (exploit the assumption that the special parameters are small).
% these small parameters that characterize
% easy instances.
Formally, given an NP-hard problem with problem size $n$,
let the easy-to-solve instances be characterized by special
 parameters $k_1,\ldots,k_c$. If we are able to derive an algorithm that
 solves these instances in $O(f(k_1,\ldots,k_c)\textsc{poly}(n))$, then we
 claim that the problem is {\em fixed-parameter tractable (FPT)} with respect to
 the $k_i$s.  Here, $f$ is an arbitrary function that is allowed to be
 exponential.  We do require that the running time be polynomial in
 the input size $n$. That is, an FPT problem is practically solvable for large
 inputs, as long as the special parameters are small (i.e., they indeed describe
 {\em easy} instances).

% One prominent FPT result
% is an algorithm that solves vertex cover in $O(1.2738^k +
% kn)$~\cite{Chen2006:Improved}, where $k$ is the vertex cover size.  If we are
% dealing with a vertex cover scenario where $k$ is small (i.e., our budget only
% allows us to pick $50$ vertices), then specifically for this scenario,
% NP-completeness of vertex cover is not a limiting result.

{\em It should be noted that
this paper's focus is not to push the frontier of theoretical fixed-parameter analysis.
Instead, fixed-parameter analysis is our {\bf means} to design scalable algorithms for our specific application on AD graphs.}
As a matter of fact, our approaches combine theoretical fixed-parameter analysis and practical optimisation techniques.

We observe that practical AD graphs have two noticeable
structural features.\footnote{We have included an example synthetic AD graph in the appendix, which is generated using {\sc BloodHound}'s AD graph generator {\sc
DBCreator}.
The AD environment of an organisation is
considered sensitive, so in this paper, we only reference
AD graphs generated using {\sc DBCreator} and {\sc adsimulator}, which is another open source tool for generating synthetic AD graphs.}
%: github.com/nicolas-carolo/adsimulator).}
% (https://github.com/BloodHoundAD/BloodHound-Tools/tree/master/DBCreator).}
We first observe that {\em the attack paths tend to be short}.
% In Figure~\ref{fig:r1000}, from any
% node, it takes at most $12$ hops to get to {\sc DA}.
Note that we are not claiming that long paths do not exist (i.e., there are cycles in AD graphs).
The phrase ``attack paths'' refer to shortest paths that the attacker would actually use.
The {\sc BloodHound} team uses the phrase ``six degrees
of domain admin'' to draw the analogy to the famous ``six degree of
separation'' idea from the small-world problem~\cite{Milgram1967:SmallWorld} (i.e., all people in this world are on average six or fewer
social connections away from each other).  That is, in an organisation, it is expected
that it takes only a few hops to travel from an intern's account to the CEO's
account. Similar to the ``small-world'' hypothesis, attack paths being short is an {\em unproven observation} that we expect to hold true for practical purposes.

The second structural feature is that AD graphs are very similar to trees.
% In Figure~\ref{fig:r1000}, there are only $27\%$ of the nodes that have
% multiple parents (a tree node should have one parent). We call these nodes
% splitting nodes.  Among the splitting nodes, most only have two parents. The
% splitting node with the maximum number of parents has only six parents.
% It should be noted that the above two observations are not artifacts of {\sc
% DBCreator}.
The tree-like structure comes from the fact that it
is considered a best practise for the AD environment to follow
the organisation chart. For example, human resources
would form one tree branch while marketing would form another tree branch.
However, an Active Directory attack graph is almost never exactly a tree,
because there could be valid reasons for an account in human resources to
access data on a computer that belongs to marketing.  We
could interpret Active Directory attack graphs as {\em trees with extra {\em non-tree edges}
that represent security exceptions}.

Our aim is to design {\em practically scalable} algorithms for
optimal pure and mixed strategy edge blocking. For organisations with
thousands of computers in internal networks, the AD graphs
generally involve {\em tens of thousands of nodes}.
We manage to scale to such magnitude by exploiting the
aforementioned structural features of practical AD graphs.

We first show that having short attack paths alone is not enough to derive
efficient algorithms.  Even if the maximum attack path length is a constant,
both pure and mixed strategy blocking are NP-hard.  We then focus on exploring
the tree-likeness of practical AD graphs.

Our first approach focuses on pure
strategy blocking only.
For graphs with small tree widths, we propose a tree decomposition based
dynamic program, which scales better than existing algorithms from \cite{Guo22:Practical}.  We then propose a general method for converting tree
decomposition based dynamic programs to reinforcement learning environments.
When tree widths are small, the derived reinforcement learning environments'
observation and action spaces are both small. This leads to an anytime
algorithm that scales better than dynamic program, but loses the optimality
guarantee.

Our second approach handles both pure and mixed strategy blocking.
We invent a non-standard fixed parameter
specifically for our application on AD graphs.  A typical attack path describes a {\em privilege escalation} pathway.
It is rare for a node to have
more than one {\em privilege escalating} out-going edges (as such edges often represent security exceptions or misconfigurations). We observe that practical AD graphs consist of {\em non-splitting paths} (paths where
every node has one out-going edge).  For graphs with small numbers of
non-splitting paths, we propose a kernelization technique that significantly
downsizes the model, which is then solved via mixed-integer programming.  We experimentally verify that this
approach scales {\em exceptionally well} on synthetic AD graphs generated by two
open source AD graph generators ({\sc DBCreator} and {\sc adsimulator}).

\section{Related Research}

%Jain~\emph{et al.}~\shortcite{Jain2011:Double} proposed a double-oracle algorithm
%for two-player equilibrium calculation on attack graphs.
%There are several model
%differences between the authors' model and our model.
%Not all edges are
%blockable in our model, which invalidates cutting-based algorithms such as
%\cite{Washburn1995:TwoPerson}.  In \cite{Jain2011:Double}, the only
%way an attacker fails is when the attacker travels through an edge that is
%partially blocked by the defender.
%%Therefore, all attack paths are valued the same from the perspective of the attacker.
%In our model, the edges have intrinsic
%failure rates %That is, the attacker prefers some paths over the others. In
%% In \cite{Jain2011:Double}, all paths are the same when it comes to the same
%% destination. In our model,
%and the defender's aim is not just to catch the
%attacker, but also to make some paths worse and force the attacker to choose a
%sub-optimal path.
%Lastly, for large organisations, practical AD
%graphs have tens of thousands of nodes.  Our equilibrium computation
%algorithms are specifically optimized for AD style graphs.
%\cite{Jain2011:Double}'s approach is for general graphs, so it takes hours to
%handle a few hundred nodes.

\cite{Guo22:Practical} studied edge
interdiction for AD graphs, where the
% There are three main
% differences between our model and the model in \cite{Guo2021:Practical}.
attacker is given an entry node by nature.
% , instead of being able to
% strategically choose an entry node.
In
\cite{Guo22:Practical}, the defensive
goal is to maximize the attacker's {\em expected} attack path length (i.e., number of hops). In this paper,
the defensive goal is to minimize the attacker's {\em worst-case} success rate, based on the assumption that different edges may have different failure rates.
% Due to model
% differences, the algorithms from \cite{Guo2021:Practical} do not apply to this
% paper's model.
% This paper also improves upon \cite{Guo2021:Practical} on
% multiple fronts.
The authors proposed a tree decomposition based
dynamic program, but it only applies to acyclic AD graphs.
Practical AD graphs do contain cycles so this algorithm
does not apply. The authors
resorted to graph convolutional neural
network as a heuristic to handle large AD graphs with cycles.
{\em Our proposed algorithms can
handle cycles, scale better, and produce the optimal results (instead of being mere
heuristics).}  Furthermore, \cite{Guo22:Practical}
only studied pure strategy blocking.
% , whereas in this paper, we also study mixed strategy
% blocking.

\cite{Goel22:Defending} studied a different model on edge interdiction for AD graphs.
Under the authors' model, both the attacker's and the defender's problem are \#P-hard, and the authors proposed a defensive heuristic based on combining neural networks and diversity evolutionary computation.

% A few results of this paper are inspired by previous techniques for similar
% models.
The model studied in this paper is similar to the {\em bounded length cut}
problem studied in \cite{Golovach2011:Paths} and
\cite{Dvorak2018:Parameterized}, where the goal is to remove some edges so that
the minimum path length between a source and a destination meets a minimum threshold.
\cite{Dvorak2018:Parameterized} proposed a tree decomposition based dynamic program for the bounded length cut problem. The authors' algorithms
require that all source and destination nodes be added to every bag in the
tree decomposition. {\em This is fine for showing the theoretical existence
of FPT algorithms, but it is practically not scalable.}
Furthermore, for bounded length cut, if a path is shorter than
the threshold, then it {\em must be cut} and if a path is longer than the threshold,
then it {\em can be safely ignored}. This is a much clearer picture than our model
where we need to judge the relative importance
of edges and spend the budget on the most vital ones.

% Our goal
% is allocate the limited budget on key edges to maximally decrease the attacker's
% success rate.

% The authors proposed an FPT
% algorithm whose core idea is elegant: if there exists a path whose length is
% below the threshold, then we must block at least one of the edges along this path.
% This idea carries
% over to our model. We must block at least one edge from the attacker's favourite
% attack path. Otherwise, the attacker is not affected by the blocking actions.
% In this paper, we study two parameters for describing tree-likeness, which are
% the tree width and the number of non-tree edges. These FPT parameters were
% studied for similar models, such as \cite{Bazgan2019:more} and
% \cite{Dvorak2018:Parameterized}.
% Stackelberg games have also been studied in combination with other attack graph
% models.
\cite{Jain2011:Double} proposed a double-oracle algorithm
for equilibrium calculation on attack graphs, whose model is defined differently.
Their
approach is designed for general graphs so it only scales to a few hundred
nodes and therefore is not suitable for practical AD graphs.
%graphs have tens of thousands of nodes.  Our equilibrium computation
%algorithms are specifically optimized for AD style graphs.
%\cite{Jain2011:Double}'s approach is for general graphs, so it takes hours to
%handle a few hundred nodes.
\cite{Aziz2018:Defender,Aziz2017:Weakening} studied
node interdiction for minimizing inverse geodesic length. \cite{Durkota2019:Hardening} and \cite{Milani2020:Harnessing} studied deception based defense on cyber
attack graphs.

\section{Formal Model Description}

We use a directed graph $G=(V,E)$ to describe the Active Directory environment.
Every edge $e$ has a failure rate $f(e)$.
There is one destination node {\sc DA} (Domain Admin).  There are $s$ entry nodes.  The
attacker can start from any entry node and take any route. The attacker's goal
is to maximize the success rate to reach {\sc DA}, by picking an optimal entry node and an optimal attack path.
The defender picks $b$ edges to block from a set of blockable edges
$E_b\subseteq E$, where $b$ is the defensive budget. The aim of the defender is
to minimize the attacker's success rate. %Figure~\ref{fig:example} is an example problem setup.
\begin{figure}[h]
    \centering
  \includegraphics[width=0.5\linewidth]{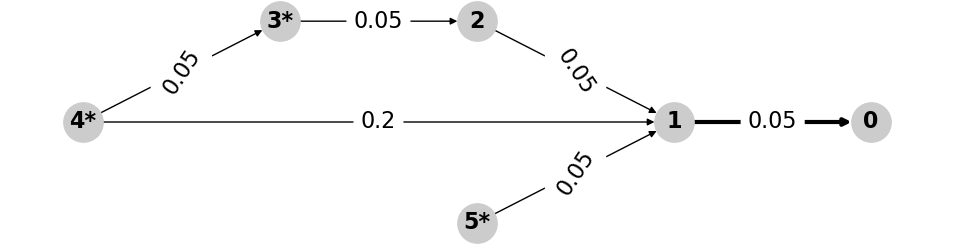}
    \caption{Example attack graph. Node $0$ is {\sc DA}. Node $4,3,5$ are entry nodes (marked using $*$).
    Edge labels represent the edges' failure rates.
    Thick edges (i.e., $1\rightarrow 0$) are not blockable.
    In the appendix, we illustrate how to optimally block $2$ edges from this graph for both pure and mixed strategy.
    }
  \label{fig:example}
\end{figure}

We study both pure and mixed strategy blocking.  We use $B(e)$ to describe the
probability that a blockable edge $e\in E_b$ is blocked.
For pure strategy blocking, $B(e)$ equals either $0$ or $1$.
For mixed strategy blocking, $B(e)$ is in $[0,1]$.
The budget constraint is $\sum_{e\in E_b}B(e)\le b$.
% For pure strategy blocking, the defender blocks (at most) $b$ edges {\em
% deterministically}.
% represent the new graph with blocked edges removed.
We adopt the standard Stackelberg game model by assuming that the attacker can
observe the defensive strategy and then plays a best response.
% \footnote{This is enabled by network scanning tools like {\sc SharpHound}.}
For a mixed strategy defense, we assume that the attacker can observe $B(e)$'s probabilistic values,
but not the realisations.
% I.e., if $B(e)=0.5$, then the attacker knows that
% $e$ is blocked with $50\%$ chance, but not specifically whether it is blocked or not.

Given $B$, the attacker's optimal attack path can be found via
$\max_{p\in P}\left\{\prod_{e \in p}(1-f(e))(1-B(e))\right\}$,
where $P$ is the
set of all attack paths from all entry nodes.
This maximization problem is equivalent to
$\min_{p\in P}\left\{\sum_{e \in p}\left(-\ln(1-f(e))-\ln(1-B(e))\right)\right\}$.
By applying natural log to convert from product to sum,
we treat an edge's ``distance'' as $-\ln(1-f(e))-\ln(1-B(e))$ (nonnegative).
The attacker's optimal attack path can be solved using Dijkstra's shortest path algorithm~\cite{Dijkstra1959:note}.
Let $\textsc{SR}(B)$ be the success rate of the attacker facing blocking policy $B$.
The defender's problem is $\min_{B}\textsc{SR}(B)$.

% Let the defensive strategy set be $D$.

% For pure strategy blocking,
% \[D=\left\{B|\forall e\notin E_b, B(e)=0; \forall e\in E_b, B(e)\in \{0,1\};
% \sum_{e\in E_b}B(e)\le b\right\}\]

% For mixed strategy blocking,
% \[D=\left\{B|\forall e\notin E_b, B(e)=0; \forall e\in E_b, B(e)\in [0,1];
% \sum_{e\in E_b}B(e)\le b\right\}\]

% 0.95*0.95(1-x) = 0.8(1-y) = 0.95(1-z)
% 0.9025 (1-x) = 0.8 (1-y) = 0.95 (1-z)
% t + 0.9025/0.8*t + 0.95t = 1
% 1-x+1-y+1-z = 1
% x = 0.675
% y = 0.634
% z = 0.691

% \section{NP-Hardness for Constant Maximum Attack Path Length}
% \label{sec:nphard}

% The first parameter we consider is the maximum attack path length $l$.
% For practical Active
% Directory attack graphs, $l$ tends to be small.  We first show that having a
% small $l$ alone does not guarantee the existence of efficient algorithms.

Earlier we mentioned that our plan is to exploit the
special structural features of practical AD graphs.
Our first result
is a negative result, which shows that having short attack paths alone is not
enough to derive efficient algorithms. That is, we do need
to consider the tree-like features.

\begin{theorem}
Both pure and mixed strategy blocking are NP-hard for constant maximum attack path length.
\label{thm:nphard}
\end{theorem}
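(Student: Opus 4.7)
The plan is to reduce from \textsc{Vertex Cover} and to establish NP-hardness for pure and mixed strategy blocking simultaneously, using a single gadget in which every attack path has constant length and uses exactly two blockable edges. I would target the decision version ``can the defender force $\textsc{SR}(B)=0$ with budget $b$?'', which in both regimes will turn out to be equivalent to the existence of a size-$k$ vertex cover in the input graph $H=(V_H,E_H)$.

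Given $(H,k)$, my construction goes as follows. For each $v\in V_H$, introduce two nodes $\alpha_v,\beta_v$ and one blockable edge $e_v=(\alpha_v,\beta_v)$ with failure rate $0$. For each ordered pair $\epsilon=(u,v)$ arising from an undirected edge $\{u,v\}\in E_H$ (both orderings added), introduce an entry $s_\epsilon$, intermediate nodes $m_\epsilon,n_\epsilon$, and the unblockable failure-rate-$0$ edges $s_\epsilon\to\alpha_u$, $\beta_u\to m_\epsilon$, $m_\epsilon\to\alpha_v$, $\beta_v\to n_\epsilon$, and $n_\epsilon\to t$. Set $b=k$. By inspecting the out-neighbors of each node type, every attack path from $s_{(u,v)}$ to $t$ has length exactly $7$ and uses exactly two blockable edges $e_u$ and $e_{v'}$ for some $v'\in N_H(u)$; nondeterminism is confined to $\beta_u$ (choice of $m_{(u,\cdot)}$) and $\beta_{v'}$ (choice of $n_{(\cdot,v')}$), and $t$ is reachable only through an $n_\epsilon$ node, so no shortcut avoids either blockable edge.

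I would then verify the key equivalence. The attacker's optimal success from $s_{(u,v)}$ is $(1-B(e_u))\max_{v'\in N_H(u)}(1-B(e_{v'}))$, which vanishes iff $B(e_u)=1$ or $B(e_{v'})=1$ for every $v'\in N_H(u)$. Let $F=\{v\in V_H:B(e_v)=1\}$; the budget enforces $|F|\le b$, and the per-$u$ vanishing condition for every non-isolated $u$ is exactly the statement that $F$ is a vertex cover of $H$. Because the product of two factors each in $[0,1)$ is always positive, driving $\textsc{SR}(B)$ to $0$ requires integer blocking on a hitting set of the attack-path family, so mixed fractional strategies cannot beat pure ones on the ``success equals $0$'' question. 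The NP-hardness of \textsc{Vertex Cover} therefore transports to both the pure and the mixed version of our problem, and the maximum attack path length in the construction is the constant $7$.

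The main obstacle will be rigorously certifying the no-shortcut property of the gadget: a single overlooked edge (such as a direct $\beta_v\to t$ edge or an extra out-edge of $\alpha_v$) could create a length-$3$ path using only one blockable edge and silently collapse the reduction. I would handle this by enumerating, class by class, the out-edges of $s_\epsilon,\alpha_v,\beta_v,m_\epsilon,n_\epsilon$ and the in-edges of $t$, confirming that every transition except at the two $\beta$ branchings is deterministic and that every route into $t$ passes through some $n_\epsilon$, so that the attacker is forced onto the intended length-$7$ pipeline containing both blockable edges.
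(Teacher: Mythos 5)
There is a genuine gap: your gadget does not have the no-shortcut property that you yourself identify as the crux, and the failure is exactly of the kind you warn about. For an ordered pair $\epsilon=(w,u)$ you add the edges $\beta_u\to n_{(w,u)}$ and $n_{(w,u)}\to t$, and these exist for every $u$ that has at least one neighbour $w$. Consequently, from any entry $s_{(u,v)}$ the attacker can take $s_{(u,v)}\to\alpha_u\to\beta_u\to n_{(w,u)}\to t$, a length-$4$ path that crosses only the single blockable edge $e_u$. The node $\beta_u$ is being asked to play two incompatible roles: as the exit of $u$'s ``first-position'' gadget it must be forced onward into some $m_{(u,\cdot)}$, while as the exit of $u$'s ``second-position'' gadget it must be allowed to reach $t$ through $n_{(\cdot,u)}$. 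With this shortcut the success rate from $s_{(u,v)}$ is just $1-B(e_u)$, so forcing $\textsc{SR}(B)=0$ requires blocking $e_u$ for every non-isolated $u$, and your reduction decides only whether the number of non-isolated vertices is at most $k$ --- a trivial question carrying no vertex-cover information.

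The natural repair --- splitting each vertex gadget into a first-position copy and a second-position copy so that every entry-to-$t$ path really does traverse two distinct blockable edges --- breaks your budget accounting: the feasible integral blockings become pairs $(F_1,F_2)$ satisfying ``$u\in F_1$ or $N_H(u)\subseteq F_2$'' for every non-isolated $u$, and the minimum of $|F_1|+|F_2|$ is sandwiched between the vertex cover number and twice it rather than equal to it, so the clean equivalence you state must be re-derived. This is essentially why the paper's gadget is heavier: it uses six nodes and three blockable edges (left, middle, right) per vertex, with cross edges running from the first half of one gadget ($i3$) into the second half of another ($j4$), ``broken'' edges of failure rate $1-\epsilon^4$ and a success-rate threshold of $\epsilon^8$ in place of your exact-zero target, and a budget of $n'+k^*$; its mixed-strategy argument then rounds nearly blocked edges to fully blocked ones, which is why it establishes hardness via inapproximability of vertex cover rather than via your (otherwise legitimate and appealingly simple) observation that $\textsc{SR}=0$ forces integral blocking. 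For the pure case the paper takes a different route entirely, citing the known NP-hardness of $4$-length-bounded edge cut. Your ``threshold zero'' idea would be a genuine simplification of the mixed-strategy side, but it only becomes a proof once the gadget actually forces two distinct blockable edges onto every path, which the present construction does not.
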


Proof is deferred to the appendix due to space constraint.

\section{Tree Decomposition based Dynamic Program for Pure Strategy Blocking}

% We propose a dynamic program for pure strategy blocking based on tree
% decomposition.
{\em Tree decomposition} is a process that
converts a general graph to a tree, where every tree node is a bag (set) of
graph vertices.
The maximum bag size minus one is called the {\em tree width}.
A small tree width indicates that the graph is close to a tree.  Many NP-hard
combinatorial problems for general graphs become tractable if we focus
on graphs with small tree widths. We show that this is also true for our
model.  For the rest of this section, we assume the readers are already familiar
with tree decomposition related terminologies.
We have included all relevant background information regarding
tree decomposition in the appendix, including all algorithms/definitions/terminologies referenced in this paper ({\em i.e.}, {\bf nice} tree decomposition, {\bf introduce}/{\bf forget}/{\bf join} node).
We have also included a running example of our algorithm, pseudocode and relevant proofs in the appendix.

Throughout the discussion, we use {\em nodes} to refer to tree nodes in the tree
decomposition and {\em vertices} to refer to vertices in AD graphs.

Besides assuming a small tree width, another key assumption of our dynamic
program is that we assume a path's success rate is from a small set of at most
$H$ values. A path's success rate is the attacker's success rate for
going through it without any blocking.
If all edges have the same failure rate, then
$H$ is just the {\em maximum attack path length} $l$ plus $2$ ($0$ to $l$ hops, plus
``no path'').
In general, if the number of edge types is a small constant $k$, then $H\in O(l^k)$.
In our experiments, we assume that there are two types of edges
(high-failure-rate and low-failure-rate edges), which corresponds to $H\in O(l^2)$.
It should be noted that $H$ is only used for {\em worst-case} complexity analysis.
In experiments, given a specific graph, a path's number
of possible success rates is often significantly less (i.e.,
if the path is not blockable altogether, then there
is only one possible success rate).
% \footnote{We could also round the success rates to the nearest
% multiples of $0.001$.  $H$ is then $1001$. The maximum cumulative rounding
% error can be shown to be }

We call our DP {\sc TDCycle} (tree decomposition with cycles\footnote{
This is to differentiate from the dynamic program proposed in \cite{Guo22:Practical}, which cannot handle cycles and does
not guarantee correctness for practical AD graphs as they do contain cycles.}).
The first step is to treat the attack graph as an undirected graph
and then
generate a tree decomposition.
It should be noted that the optimal tree
decomposition with the minimum tree width is NP-hard to
compute~\cite{Arnborg1987:Complexity}. In our experiments, we adopt the
{\em vertex elimination} heuristic for generating tree
decomposition~\cite{Bodlaender2006:Exact}.
We then convert the resulting tree
decomposition into a {\em nice tree
decomposition}~\cite{Cygan2015:Parameterized}, where the root node is a bag
containing {\sc DA} only and all the leaf nodes are bags of size one.
%We ignore all empty bags as they play no role in our DP.
Please refer to the appendix for the process details.
We use {\sc TD} to denote the resulting nice tree decomposition.
{\sc TD} has $O(wn)$ nodes where $w$ is the tree width.
%of {\sc TD}.

\begin{lemma}
    \label{lm:forget1to1}
    Let $(u,v)$ be an arbitrary edge from the original AD graph. Under {\sc TD}, there exists one and only one
    {\em forget node} $X$, whose child is denoted as $X'$, where $\{u,v\}\subseteq X'$ and $X'\setminus X$ is either $\{u\}$ or $\{v\}$.
\end{lemma}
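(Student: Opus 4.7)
The plan is to combine the two standard structural facts about nice tree decompositions: (i) for any vertex $w$, the set of bags containing $w$ forms a connected subtree, and (ii) for any edge $(u,v)$ of the original graph, there is at least one bag containing both $u$ and $v$. From (i), both $T_u = \{N : u \in N\}$ and $T_v = \{N : v \in N\}$ are connected subtrees of {\sc TD}, and by (ii) their intersection $T_u \cap T_v$ is a non-empty connected subtree as well.

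Next I would let $X'$ be the topmost node of $T_u \cap T_v$ (the one closest to the root of {\sc TD}) and let $X$ denote its parent. The parent $X$ exists because the root of {\sc TD} contains only {\sc DA}, so it cannot contain the two distinct vertices $u$ and $v$; hence $X'$ is not the root. By the maximality of $X'$ in $T_u \cap T_v$, the bag $X$ is missing at least one of $u, v$. I would then do a short case analysis on the type of $X$. If $X$ is an introduce node, then $X \supseteq X'$, so $\{u,v\} \subseteq X$, contradicting maximality; if $X$ is a join node, then $X = X'$, again a contradiction. Therefore $X$ must be a forget node, and $X' = X \cup \{w\}$ for a unique $w$. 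Since exactly one vertex is missing from $X$ compared with $X'$, and at least one of $u,v$ is missing, we get $w \in \{u, v\}$, which establishes $X' \setminus X \in \{\{u\},\{v\}\}$ and proves existence.

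For uniqueness, suppose another forget node $Y$ with child $Y'$ satisfies the same conditions. Then $\{u,v\} \subseteq Y'$ gives $Y' \in T_u \cap T_v$. Since $X'$ is the topmost node of $T_u \cap T_v$, $Y'$ is either $X'$ itself (but then $Y = X$ by the nice-TD constraint that a forget node has a unique parent relationship with its child, so we are done) or $Y'$ lies strictly below $X'$. In the latter case, the path from $Y'$ up to $X'$ lies entirely in $T_u \cap T_v$ by connectedness of the intersection, and this path passes through $Y$. Hence $Y \in T_u \cap T_v$, i.e.\ $\{u,v\} \subseteq Y$, which directly contradicts the requirement that $Y'\setminus Y$ be $\{u\}$ or $\{v\}$. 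Uniqueness follows.

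The main obstacle I anticipate is purely bookkeeping: carefully invoking the nice-tree-decomposition case analysis (introduce vs.\ forget vs.\ join) so that no degenerate case, such as the root of {\sc TD} or a join node whose children coincide with the parent, slips through. The underlying structural ideas (connected-subtree property and existence of a common bag for each edge) are standard, so once the case analysis is laid out cleanly, both existence and uniqueness follow immediately.
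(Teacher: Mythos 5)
Your proof is correct and follows essentially the same route as the paper's: both locate the highest bag containing both $u$ and $v$ (the paper by walking upward from a common bag until one endpoint disappears, you by taking the topmost node of $T_u\cap T_v$) and observe that the parent must be a forget node dropping $u$ or $v$, and both derive uniqueness from the connected-subtree property of tree decompositions. Your uniqueness argument via connectivity of $T_u\cap T_v$ is phrased slightly differently from the paper's (which invokes that each vertex is forgotten at a unique forget node and compares the two candidates), but the underlying structural facts are identical.
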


The above lemma basically says that every edge can be ``assigned'' to exactly
one forget node.
For forget node $(x_2,x_3,\ldots,x_k)$ with child $(x_1,x_2,\ldots,x_k)$ (i.e., $x_1$
is forgotten), we assign all edges between $x_1$ and one of $x_2,\ldots,x_k$ to
this forget node.
The high-level process of our dynamic program
is that we first remove all edges from the graph. We then go through {\sc TD}
{\bf bottom up}. At forget node $X$, we examine all the edges assigned to
$X$. If an edge is not blockable or we decide not to block it, then we put it
back into the graph.  Otherwise, we do not put it
back.  After we finish the whole tree (while ensuring that the budget spent
is at most $b$ $\iff$ we have put back at least
$|E|-b$ edges), we end up with a complete blocking policy.

Let $X=(x_1,x_2,\ldots,x_k)$ be a tree node ($k\le w+1$).  Let $St(X)$ be the subtree of
{\sc TD} rooted at $X$.  Let $Ch(X)$ be the set of all graph vertices
referenced in $St(X)$.  Let $Ch(X)'=Ch(X)\setminus X$.  $Ch(X)'$ is then the
set of vertices already forgotten after we process $St(X)$ in the bottom-up fashion.  A
known property of tree decomposition is that the vertices in $Ch(X)'$ cannot
directly reach any vertex in $V\setminus Ch(X)$.
That is, any attack path from
an entry vertex in $Ch(X)'$ to {\sc DA} must pass through some $x_i$ in $X$.
Also, any attack path (not necessarily originating from $Ch(X)'$) may involve
vertices in $Ch(X)'$ by entering the graph region form by $Ch(X)'$ via
 a certain $x_i$ and then exit the region via a different node $x_j$.
An attack path may ``enter and exit'' the region multiple times but
all entries and exists must be via the vertices in $X$.

Suppose
we have spent $b'$ units of budget on (forget nodes of) $St(X)$. We do not
need to keep track of the specifics of which edges have been blocked. We
only need to track the total budget spending and the following ``distance'' matrix:
\begin{footnotesize}
\[
    M=\begin{bmatrix}
        d_{11} & d_{12} & \ldots & d_{1k}\\
        \ldots\\
        d_{k1} & d_{k2} & \ldots & d_{kk}\\
    \end{bmatrix}\]
\end{footnotesize}

$d_{ij}$ represents the minimum path distance\footnote{Recall
that given an edge with failure rate $f(e)$, we treat the edge's ``distance''
as $-\ln(1-f(e))$ when it is not blocked.} between $x_i$ and $x_j$, where the intermediate edges used
are the edges we have already put back after processing $St(X)$.  Diagonal element $d_{ii}$ represents the
minimum path distance from any entry vertex (among $Ch(X)$) to $x_i$.
% It should be noted that if both $x_i$ and $x_j$ are in $X$, then that means the direct edge
% between $i$ and $j$ cannot be used since it has not been put back by
% definition.
We say the tuple $(M,b')$ is {\em possible} at $X$ if and
only if it is possible to spent $b'$ ($b'\le b$) on $St(X)$ to achieve the distance matrix
$M$.

Every tree node of {\sc TD} corresponds to a DP subproblem and there
are $O(wn)$ subproblems.  The subproblem corresponding to $X$ is denoted
as $DP(X)$. $DP(X)$ simply returns the collection of all possible tuples at node $X$.

\noindent{\bf Base cases:} For a leaf node $X=\{x\}$,
if $x$ is an entry vertex, then $DP(X)$ contains one tuple, which is $([0],0)$.
Otherwise, the only possible tuple in $DP(X)$ is $([\infty],0)$.

\noindent{\bf Original problem:} The root of {\sc TD} is $\{\textsc{DA}\}$.
The original problem is then $DP(\{\textsc{DA}\})$, which returns the collection
of all possible tuples at the root. Every tuple from the collection has the form $([d_{\textsc{DA,DA}}],b')$,
which represents that it is possible to spend $b'$ to ensure that the attacker's distance
from {\sc DA} is $d_{\textsc{DA,DA}}$. The maximum $d_{\textsc{DA,DA}}$
in $DP(\{\textsc{DA}\})$
corresponds to the attacker's success rate facing optimal blocking.

\vspace{.05in}
\noindent We then present the recursive relationship for our DP:

\noindent{\bf Introduce node:} Let $X=(x_1,\ldots,x_k,y)$ be an introduce node,
whose child is $X'=(x_1,\ldots,x_k)$. Given a possible tuple in $DP(X')$, we generate a new tuple as follows, which should belong to $DP(X)$. $d_{yy}$ is $0$ if $y$
is an entry vertex and it is $\infty$ otherwise (when $y$ is introduced, all its edges have not been put back yet so it is disconnected from the $x_i$).
\begin{footnotesize}
\[
    \left(\begin{bmatrix}
        d_{11} & \ldots & d_{1k}\\
        \ldots\\
        d_{k1} & \ldots & d_{kk}\\
    \end{bmatrix},b'\right)
    \rightarrow
    \left(\begin{bmatrix}
        d_{11} & \ldots & d_{1k} & \infty\\
        \ldots\\
        d_{k1} & \ldots & d_{kk} & \infty\\
        \infty  & \ldots & \infty & d_{yy} \\
    \end{bmatrix},b'\right)
\]
\end{footnotesize}

\noindent{\bf Forget node:}
Let $X=(x_2,\ldots,x_k)$ be a forget node, whose child is $X'=(x_1,\ldots,x_k)$.
At $X$, we need to determine how to block edges connecting $x_1$ and the rest $x_2,\ldots,x_k$. There are at most $k-1$ edges to block so we simply go over at most $2^{k-1}$ blocking options.
% \footnote{In all synthetic graphs used in our experiments, for any $u$ and $v$, only one of $u\rightarrow v$ and $v\rightarrow u$ exists. This is naturally occurring. We
% could also artificially enforce this by introducing an auxiliary vertex between $u$ and $v$
% whenever both $u\rightarrow v$ and $v\rightarrow u$ exist.}
For each specific blocking option (corresponding to a spending of $b''$), we convert a tuple in $DP(X')$ to a tuple in $DP(X)$ as follows (the new tuple is discarded if $b'+b''>b$):
\begin{footnotesize}
\[
    \left(\begin{bmatrix}
        d_{11} & \ldots & d_{1k} \\
        \ldots\\
        d_{k1} & \ldots & d_{kk} \\
    \end{bmatrix},b'\right)
    \rightarrow
    \left(\begin{bmatrix}
        d_{22}' & \ldots & d_{2k}'\\
        \ldots\\
        d_{k2}' & \ldots & d_{kk}'\\
    \end{bmatrix},b'+b''\right)
\]
\end{footnotesize}

The $d_{ij}'$ are updated distances considering the newly put back edges.
We need to run an {\em all-pair shortest path} algorithm with complexity $O(k^3)$ for this update.

\noindent{\bf Join node:} Let $X$ be a join node with two children $X_1$ and
$X_2$.  For $(M_1,b_1)\in DP(X_1)$ and $(M_2,b_2)\in DP(X_2)$, we label $(M',
b_1+b_2)$ as a possible tuple in $DP(X)$ if $b_1+b_2\le b$. $M'$ is
the element-wise minimum between $M_1$ and $M_2$.

\begin{theorem}
    \label{thm:tdcycle}
    {\sc TDCycle}'s complexity is $O(H^{2w^2}b^2w^2n)$.
\end{theorem}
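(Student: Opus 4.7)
The plan is to bound TDCycle's running time via three ingredients: the count of DP subproblems, the size of each subproblem's output table, and the per-node work, case-split over the three node types (introduce, forget, join).

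First, the paper already establishes that the nice tree decomposition TD has $O(wn)$ nodes. Second, I would bound $|DP(X)|$: a tuple is $(M,b')$ where $M$ is a $k\times k$ distance matrix with $k\le w+1$ and $b'\in\{0,1,\ldots,b\}$. The paper's assumption of at most $H$ distinct path success rates constrains each matrix entry to one of at most $H$ values, so $|DP(X)|\le H^{(w+1)^2}(b+1)$, which I fold into $O(H^{2w^2}b)$ using $(w+1)^2\le 2w^2$ for $w$ sufficiently large.

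Third, I would handle the three node types in turn. Introduce nodes append an $\infty$ row and column in $O(w)$ time per tuple, which is subsumed. Forget nodes enumerate at most $2^{k-1}\le 2^{w}$ blocking subsets for the edges assigned to them by Lemma \ref{lm:forget1to1}, then perform an $O(w^3)$ all-pair shortest path update, again dominated by the DP-table size. Join nodes are the bottleneck: for each pair in $DP(X_1)\times DP(X_2)$ we take an element-wise matrix minimum in $O(w^2)$, and with a budget-indexed accounting that fixes the output tuple universe (of size $O(H^{2w^2}b)$) and sums over the $O(b)$ splits $b'=b_1+b_2$, the per-join cost works out to $O(H^{2w^2}b^2w^2)$.

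Multiplying the dominant per-node cost by the $O(wn)$ node count gives the claimed $O(H^{2w^2}b^2w^2n)$, with any residual factor of $w$ absorbed into the constants of the $O(\cdot)$. The main obstacle is the join accounting: a naive pair-by-pair enumeration of $DP(X_1)\times DP(X_2)$ would inflate the $H$-exponent to roughly $4w^2$. Avoiding this inflation requires enumerating outputs in the common tuple universe rather than input pairs, exploiting the fact that the element-wise minimum does not enlarge the set of achievable matrices, so that the join cost scales with the output-table size times the number of budget splits rather than with the full Cartesian product of input tables.
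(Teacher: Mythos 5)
Your overall skeleton (count the $O(wn)$ tree nodes, bound the size of each $DP(X)$ table, then case-split on introduce/forget/join and identify the join as the bottleneck) matches the paper's proof. The introduce and forget accounting is fine, modulo the paper's explicit use of $H\ge 2$ to absorb the $2^w w$ factor from the forget nodes into the join term. The problem is in your join-node step, which is exactly the step that produces the dominant $H^{2w^2}b^2$ factor. The paper's argument there is the \emph{naive} one you reject: it bounds a single table by $O(H^{w^2}b)$ (writing $w^2$ for the number of matrix entries, i.e.\ treating the bag as effectively $w\times w$ in the exponent), enumerates all $O(H^{2w^2}b^2)$ pairs in $DP(X_1)\times DP(X_2)$, and spends $O(w^2)$ per pair on the element-wise minimum. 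The factor of $2$ in the exponent of $H^{2w^2}$ comes from squaring the per-table size at the join, not from a single table of size $H^{2w^2}$.

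Because you instead folded $(w+1)^2\le 2w^2$ into the \emph{single}-table bound, you are forced to invent an ``output-universe'' join that runs in time proportional to the output table size times the number of budget splits, and this step is not justified. Knowing that the output table has size at most $O(H^{2w^2}b)$ (which is true: the element-wise minimum stays inside the universe of $k\times k$ matrices over the $H$ values) does not give you an algorithm that \emph{computes} the join in that time. For a fixed candidate output $(M',b')$ and a fixed split $b'=b_1+b_2$, deciding whether $M'=\min(M_1,M_2)$ for some $M_1\in DP(X_1)$ at budget $b_1$ and $M_2\in DP(X_2)$ at budget $b_2$ is not an $O(w^2)$ check; in general one still has to examine pairs. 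You have conflated the size of the output set with the cost of constructing it. The clean fix is to adopt the paper's convention for the per-table bound ($O(H^{w^2}b)$, with the $(w+1)^2$-versus-$w^2$ slack absorbed into the notation) and then let the straightforward Cartesian-product enumeration at join nodes deliver $O(H^{2w^2}b^2w^2)$ per node and $O(H^{2w^2}b^2w^2n)$ overall; no output-indexed trick is needed, and none is available as stated.
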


Proof is deferred to the appendix due to space constraint.

\vspace{.05in}
To summarize our dynamic program, we follow a bottom-up order
(from leaf nodes of nice tree decomposition {\sc TD} to the root).  We propagate the set of all possible tuples $(M,b')$ as we process the nodes. At introduce/join nodes, we follow a pre-determined propagation rule and do not make any blocking decisions. At forget nodes, we decide which edges to block from at most $w$ edges. Given a specific AD graph and its corresponding tree decomposition {\sc TD}, we can convert our dynamic program to a reinforcement learning environment as follows, which leads to an anytime algorithm that scales better than dynamic program (i.e., RL can always produce a solution, which may or may not be optimal, and generally improves over time; on the other hand, dynamic program cannot scale to handle slightly larger tree widths).
{\em Our conversion technique can potentially be applied to tree decomposition based
dynamic programs for other combinatorial optimisation problems.}

\begin{itemize}

    \item We use post-order traversal on {\sc TD} to create an ordering of the nodes (children are processed before parents).

    \item Instead of propagating {\em all possible} tuples $(M,b')$, we
        only
        propagate the {\em best tuple} (we have found so far during RL training). For example, consider a forget
        node $X$ with child $X'$.  The best tuple at node $X'$ is passed on to $X$ as {\em observation}.
        The {\em action} for this observation is then to decide which edges to block at $X$.

    \item Specifically to our model, for forget node $X$, if there are $k$ ($k\le w$) edges to block,
        then we treat it as $k$ separate steps in our reinforcement learning environment. That is, every {\em step} makes a blocking decision on a single edge and the action space is always binary.

    \item For introduce/join nodes, since we do not need to make any decisions, our reinforcement learning environment
        automatically processes these nodes (between {\em steps}).

    \item We set a final {\em reward} that is equal to the solution quality.

\end{itemize}

After we convert a specific AD graph into a reinforcement learning environment,
we can then apply standard RL algorithms
to search for the optimal blocking policy for the AD graph under discussion.

Following our conversion, both the observation and the action spaces are small when
tree widths are small.
Unfortunately, one downside of the above conversion technique is that there is
no guarantee on having a small {\em episode length}. We could argue that it is
impossible to guarantee small observation space, small action space, and small
episode length at the same time, unless the AD graph is relatively small in
scale. {\em After all, we are solving NP-hard problems.} Experimentally, for
smaller AD graphs, we are able to achieve near-optimal performances as the
episode lengths are manageable.  For larger AD graphs, the episode lengths are
too long. We introduce the following heuristic for limiting the episode length.
Let $T$ be the target episode length.  Our idea is to hand-pick $T$ {\em
relatively important} edges and set the unpicked edges not blockable.
In our experiments, we first calculate the {\em min cut} that separates the entry
nodes from {\sc DA} (unblockable edges' capacities are set to be large).  Let the number of blockable edges in the min cut be $C$.
If $C\ge T$, then we simply treat these $C$ edges as important and set the episode
length to $C$. If $C<T$, then we add in $T-C$ blockable edges that are
closest to {\sc DA}.

% Our experimental results are presented toward the end of this paper. Indeed we are able
% to scale up DP and obtain near optimal performances via the above RL-based
% approach.
In the appendix, we include an experiment showing that our RL-based approach is not merely performing ``random searching'' via exploration. It is indeed capable of ``learning'' to react to the given observation.
Under our original approach, the observation contains the distance matrix $M$,
the budget spent $b'$, and also the current step index.
We show that if we replace $M$ by the zero matrix or by a random matrix, then the training results significantly downgrade.
% Lastly, we discuss an important {\em threat to validity}.
% \cite{Li2018:Combinatorial}

\section{Kernelization}

{\em Kernelization} is a commonly used fixed-parameter analysis technique that
preprocesses a given problem instance and converts it to a much smaller {\em
equivalent} problem, called the {\em kernel}. We require that the kernel's size
be bounded by the special parameters (and not depend on $n$).

As mentioned in the introduction, for practical AD graphs,
most nodes have at most one out-going edge.
If an edge is not useful for the attacker,
then we can remove it without loss of generality. If an edge is useful
for the attacker, then generally, it is {\em privilege escalating}. It is rare
for a node to have two separate {\em privilege escalating} out-going edges (as they often correspond to security exceptions or misconfigurations).
We use {\sc SPLIT} to denote the set of all splitting nodes (nodes with multiple out-going edges).
We use {\sc SPLIT+DA} to denote {\sc SPLIT} with {\sc DA} added.
We use {\sc ENTRY} to denote the set of all entry nodes.
% Our kernelization idea requires the following definition:
We invent a new parameter called the number of {\em non-splitting paths}.
Experimentally, this parameter leads to algorithms that scale
exceptionally well for synthetic AD graphs generated using two different open
source AD graph generators.
% We focus on
% three parameters in this section. The first parameter is the number of {\em
% non-tree edges} $h$.
% Earlier we mentioned that we could interpret an AD graph as a tree
% with extra non-tree edges that represent security exceptions.
% A typical node has only one out-going edge and it points to its tree parent.
% % \footnote{ Nodes without out-going edges are ignored wlog.}
% % We observe that
% % it is generally rare for a node to
% % have any privilege escalation pathway. That is, many nodes don't have out-going edges
% % and they are ignored from our analysis altogether. Among the nodes considered, most only have
% % one out-going edge. We interpret these edges as tree edges that point to the
% % tree parents.
% If a node's out-degree is
% $x>1$, then we call this node a {\em splitting node} and $x-1$ of its out-going
% edges are viewed as non-tree edges.  The total number of non-tree edges $h$
% equals $|E|-(|V|-1)$.
% If $h=0$, then the graph is exactly a tree.
% The second parameter is the number of {\em splitting nodes}
% $t$.
% % {\sc DA} is a special node with no successors. All out-going edges of {\sc
% % DA} can be ignored because once the attacker reaches {\sc DA}, the attack has
% % finished.  If a node that is not {\sc DA} has no successors, then this node
% % can be safely ignored.
% We can easily see that
% $t\le h$.  The last parameter is the number of entry nodes $s$.

\begin{definition}[Non-splitting path]
    % We define a {\em non-splitting path} starting from a node $u$ as follows.
    Given node $u$, let $v$
    be one of $u$'s successors. The non-splitting path {\sc nsp}$(u,v)$ is defined
    recursively:

    \begin{itemize}

        \item If $v\in\textsc{SPLIT+DA}$, then $\textsc{nsp}(u,v)$
            is $u\rightarrow v$.

        \item Otherwise, $v$ must have a unique successor $v'$.
            $\textsc{nsp}(u,v)$ is the path that combines $u\rightarrow v$ and
            $\textsc{nsp}(v, v')$.

    \end{itemize}

\end{definition}

    In words, $\textsc{nsp}(u,v)$ is the path that goes from $u$ to $v$, then
    repeatedly moves onto the only successor of $v$ if $v$ has only one successor,
    until we reach either a splitting node or {\sc DA}.
    We use $\textsc{dest}(u,v)$ to denote the ending node of $\textsc{nsp}(u,v)$.
    We have $\textsc{dest}(u,v)\in \textsc{SPLIT+DA}$.
    A non-splitting path is called {\em blockable} if at least one of its edges
    is blockable.

    An AD graph can be viewed as the {\bf joint} of the following set of non-splitting
    paths. Our parameter (the number of non-splitting paths {\sc \#NSP}) is the size of this set.
    \[\{\textsc{nsp}(u,v) | u\in \textsc{SPLIT}\cup \textsc{ENTRY}, v\in \textsc{Successors}(u)\}\]

    The blockable edge furthest away from $u$ on the path
    $\textsc{nsp}(u,v)$ is denoted as $\textsc{bw}(u,v)$.
    % $\textsc{bw}(u,v)$ is undefined if $\textsc{nsp}(u,v)$ is not blockable.
{\sc bw} stands for {\em block-worthy} due to the following lemma.
    % Two non-splitting paths may share the same block-worthy edge.

% , because for
% our model, the only edges worth blocking are $\textsc{bw}(u,v)$ for $u\in
% \textsc{SPLIT+DA}$ and $v$ being one of $u$'s successors.  We use $\textsc{BW}$ to
% denote the set of all block-worthy edges.

\begin{lemma}\label{lm:bw} For both pure and mixed strategy blocking, we never need to spend
    more than one unit of budget on a non-splitting path.
    For any problem instance, there
exists an optimal defense that blocks only edges
    from the following set:
    %(called $\textsc{BW}$, containing at most {\sc\#NSP} edges):
    \[\textsc{BW}=\{\textsc{bw}(u,v)|u\in \textsc{SPLIT}\cup \textsc{ENTRY}, v\in \textsc{Successors}(u)\}\]
\end{lemma}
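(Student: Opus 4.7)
The plan is to transform any feasible defense $B$ into a canonical $B'$ that (i) is supported on $\textsc{BW}$, (ii) spends at most one unit of budget on any single non-splitting path, and (iii) is no worse for the defender, in the sense that the attacker's success rate under $B'$ is no larger and the total budget used is no larger than under $B$.

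I would start with two structural observations about non-splitting paths. The first is a \emph{forced-forward} property: every intermediate vertex of $\textsc{nsp}(u,v)$ has a unique outgoing edge, so once an attack path $P$ enters such a vertex it is forced to continue all the way to $\textsc{dest}(u,v)$. Consequently $P\cap\textsc{nsp}(u,v)$ is always a contiguous suffix of $\textsc{nsp}(u,v)$. The second is a \emph{shared-suffix} property: if an edge $e$ lies in both $\textsc{nsp}(u,v)$ and $\textsc{nsp}(u',v')$, the two paths must agree on everything past $e$ all the way to their common destination. From this I would derive $\textsc{BW}\cap\textsc{nsp}(u,v)\subseteq\{\textsc{bw}(u,v)\}$: if some $e\in\textsc{nsp}(u,v)$ equals $\textsc{bw}(u',v')$ for another nsp, there is no blockable edge after $e$ in the shared suffix, hence none in $\textsc{nsp}(u,v)$ after $e$ either, forcing $\textsc{bw}(u,v)=e$.

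Given any $B$, for each blockable $e\notin\textsc{BW}$ with $B(e)>0$ I follow the forced-forward path past $e$ and let $e^{\star}(e)$ denote the last blockable edge before reaching the next splitting vertex or $\textsc{DA}$; by the shared-suffix observation $e^{\star}(e)\in\textsc{BW}$, since it coincides with $\textsc{bw}(u,v)$ for every nsp through $e$. Set $B'(e)=0$ for $e\notin\textsc{BW}$ and, for each target $t\in\textsc{BW}$,
\[
B'(t)=\min\!\Bigl(1,\ B(t)+\sum_{e\,:\,e^{\star}(e)=t}B(e)\Bigr).
\]
Budget is monotonically non-increasing (the cap can only save). To see that $B'$ dominates $B$ against the attacker, fix any attack path $P$: by forced-forward, if $P$ uses any pushed edge $e$ it also uses its target $e^{\star}(e)$. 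So for each target $t$, either $P$ uses none of $\{t\}\cup\{\text{sources pushed to }t\}$ and both contributions to the success factor equal $1$, or $P$ uses $t$ and possibly some sources; in the latter case the original contribution is $(1-B(t))\prod_i(1-B(e_i))$ while the new one is $(1-B'(t))$, and the Weierstrass-type inequality $(1-x_1)\cdots(1-x_n)\ge 1-\sum_i x_i\ge 1-\min(1,\sum_i x_i)$ yields new $\le$ old. Multiplying over targets gives $\textsc{SR}(B')\le\textsc{SR}(B)$. The per-nsp budget claim then follows immediately from $\textsc{BW}\cap\textsc{nsp}(u,v)\subseteq\{\textsc{bw}(u,v)\}$: each nsp carries $B'$-mass on at most one edge, of value at most $1$.

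The main obstacle I anticipate is the shared-suffix observation. Without it a non-BW edge could lie on several nsps with genuinely different bw-targets, leaving no canonical place to push the probability mass, and we could not conclude that each nsp retains at most one BW-edge after the reduction --- which is exactly what certifies the per-nsp budget bound. The pure-strategy case specializes cleanly: either unblock $e\notin\textsc{BW}$ when $B(e^{\star}(e))=1$ already, or swap the block from $e$ to $e^{\star}(e)$ otherwise; the Weierstrass step disappears.
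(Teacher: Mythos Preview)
The paper states this lemma without proof, so there is no reference argument to compare against; your proposal is correct and supplies what the paper omits.

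Your two structural observations (forced-forward and shared-suffix) are exactly the right ingredients, and the push-to-$\textsc{bw}$ construction with the Weierstrass product bound handles the mixed case cleanly. One small point worth making explicit in the dominance step: the sources $e_i$ that a given attack path $P$ actually traverses may be a strict subset of all sources pushed to the target $t$ (two non-splitting paths can merge before reaching $\textsc{bw}$), so the chain you need is
\[
(1-B(t))\prod_{e_i\ \text{used by }P}(1-B(e_i)) \;\ge\; 1-\min\!\Bigl(1,\,B(t)+\!\!\sum_{\text{used}}B(e_i)\Bigr) \;\ge\; 1-\min\!\Bigl(1,\,B(t)+\!\!\sum_{\text{all}}B(e_j)\Bigr)=1-B'(t),
\]
the second inequality holding because the full sum dominates the partial one. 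You have this implicitly, but one extra clause removes any ambiguity in ``$\prod_i$''. Finally, note that in the paper's model a mixed strategy is parameterized directly by the marginals $B(e)$ subject to $\sum_e B(e)\le b$ and the attacker's objective is $\prod_{e\in p}(1-f(e))(1-B(e))$; hence your transformed $B'$ is automatically feasible and there is no separate realizability question about distributions over size-$b$ sets.
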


We present how to formulate our model as a nonlinear program, based on
the aforementioned non-splitting path interpretation.
The nonlinear program can then be converted to MIPs and be efficiently solved using state-of-the-art MIP solvers.
We use $B_e$ to denote the unit of budget spent on edge $e$.  $B_e$ is binary
for pure strategy blocking and is between $0$ and $1$ for mixed strategy
blocking.  As mentioned earlier, $B_e\ge 0$ for $e\in \textsc{BW}$ and
$B_e=0$ for $e\notin \textsc{BW}$.
We use $r_u$ to denote the
success rate of node $u$.  The success rate of a node is the success rate of
the optimal attack path starting from this node, under the current
defense (i.e., the $B_e$).
% For every entry node and every splitting node, we define the corresponding $r_u$.
% We also set $r_{\textsc{DA}}=1$.
We use $c_{u, v}$ to denote the
success rate of the non-splitting path $\textsc{nsp}(u,v)$ when no blocking
is applied. $c_{u,v}=\prod_{e\in\textsc{nsp}(u,v)}(1-f(e))$.
The $c_{u,v}$ are constants.
We use $r^*$ to represent the attacker's optimal success rate.
% We define three sets as follows.
\[C=\{(u,v)|u\in \textsc{SPLIT}\cup \textsc{ENTRY}, v\in \textsc{Successors}(u)\}\]
\[C^+=\{(u,v)|(u,v)\in C, \textsc{nsp}(u,v) \text{ blockable}\}\]
\[C^-=\{(u,v)|(u,v)\in C, \textsc{nsp}(u,v) \text{ not blockable}\}\]

We have the following nonlinear program:
\begin{equation*}
\begin{array}{rclcl}
\min & \multicolumn{3}{l}{r^*} \\
    r^* & \geq & r_u && \forall u\in \textsc{ENTRY} \\
    r_u & \geq & r_{\textsc{dest}(u,v)}\cdot c_{u,v}\cdot (1-B_{\textsc{bw}(u,v)}) \hspace{-.2in}&& \forall (u,v)\in C^+ \\
    r_u & \geq & r_{\textsc{dest}(u,v)}\cdot c_{u,v} && \forall (u,v)\in C^-\\
    % \sum_{e \in \textsc{BW}}B_e & \leq & b & & \\
    b & \geq & \sum_{e \in \textsc{BW}}B_e & & \\
    r_{\textsc{DA}} & = & 1 && \\
    r_u, r^* & \in & [0,1] & & \\
    B_e & \in & \{0,1\}\text{ or }[0,1]& & \\
    % B_e & \in & [0,1]\text{ (mixed strategy version)} & & \\
\end{array}
\end{equation*}

% \begin{equation*}
% \begin{array}{rclcl}
% \min & \multicolumn{3}{l}{r^*} \\
%     r^* & \geq & r_u && \forall u\in \textsc{ENTRY} \\
%     \\
%     r_u & \geq & r_{\textsc{dest}(u,v)} && \forall u\in \textsc{ENTRY} \cup \textsc{SPLIT}, \\
%     & & \cdot\quad c_{u,v} & & \forall v\in \textsc{Successors}(u), \\
%     & & \cdot\quad (1-B_{\textsc{bw}(u,v)}) & & \textsc{nsp}(u,v)\text{ blockable}\\
%     \\
%     r_u & \geq & r_{\textsc{dest}(u,v)} && \forall u\in \textsc{ENTRY} \cup \textsc{SPLIT}, \\
%     & & \cdot\quad c_{u,v} & & \forall v\in \textsc{Successors}(u), \\
%     & & & & \textsc{nsp}(u,v)\text{ not blockable}\\
%     \\
%     \sum_{e \in \textsc{BW}}B_e & \leq & b & & \\
%     r_u, r^* & \in & [0,1] & & \\
%     B_e & \in & \{0,1\} & & \text{pure strategy version}\\
%     B_e & \in & [0,1] & & \text{mixed strategy version}
% \end{array}
% \end{equation*}

The above program has at most $O(\textsc{\#NSP})$ variables and
at most $O(\textsc{\#NSP})$ constraints (both do not depend on $n$).
% The program is solvable for large graphs as long as the number
% of entry nodes and the number of non-tree edges are both small.
% Both the number of variables and the number of constraints can
% be much smaller than $n$ if $s+h$ is small.

\vspace{.1in}
\noindent{\bf Integer program for pure strategy blocking:}
For pure strategy blocking, the above program can be converted to an IP by
rewriting $r_u\ge r_{\textsc{dest}(u,v)} \cdot c_{u,v} \cdot
(1-B_{\textsc{BW}(u,v)})$ into an equivalent linear form $r_u\ge
r_{\textsc{dest}(u,v)} \cdot c_{u,v}-B_{\textsc{BW}(u,v)}$.

% Our experiments show that the above integer program easily scales to large
% attack graphs because
% despite the number of binary variables
%    being large, the numbers of valid variable combinations is small.
% For example, let us consider a splitting node with $10$ successors.  Starting
% from this splitting node, there are $10$ different non-splitting paths that
% correspond to at most $10$ different binary variables. The maximum number of
% value assignments on $10$ binary variables is $2^{10}=1024$. However, if the
% values of the other binary variables (except for these $10$ under discussion) are already decided, the optimal decision
% on this splitting node can be found by simply going over all $10$ non-splitting
% paths and block the best $x$ paths where $0\le x\le 10$.  That is, only
% $11/1024$ of the variable assignments are worth trying.

\section{Mixed Strategy Blocking}

We can convert the nonlinear program from the previous section into a
program that is {\em almost} linear.  We first observe that if under the
optimal mixed strategy defense, the attacker's success rate is at least
$\epsilon$, then that means we never want to block any edge with a probability
that is {\em strictly} more than $1-\epsilon$.  Due to this observation, we artificially enforce that
we do not block an edge with more than $1-\epsilon$ probability and solve for the
optimal defense under this restriction. If in the end result, the attacker's
success rate is at least $\epsilon$, then our restriction is not actually a restriction
at all. If our solution says that the attacker's success rate is less than
$\epsilon$, then we have an almost optimal defense anyway.
% (The achieved success rate is at most $\epsilon$ and the optimal success
% rate is at least $0$.)
In this paper, we set $\epsilon=0.01$.
That is, for $e\in\textsc{BW}$, we
require that $0\le B_e\le 0.99$ instead of $0\le B_e\le 1$.

With the above observation, we can convert the nonlinear program, which
involves multiplication, to an {\em almost} linear program as follows.  Our
trick is to replace $r_u$ by $r'_u=-\ln(r_u)$, replace $r^*$ by
$r'^*=-\ln(r^*)$, replace $B_e$ by $B'_e=-\ln(1-B_e)$, and finally replace
$c_{u,v}$ by $c'_{u,v}=-\ln(c_{u,v})$.
Our variables are now $r'^*$, the $r'_u$ and the $B'_e$.
Due to monotonicity of natural log,
we can rewrite the earlier nonlinear program as:
\begin{equation*}
\begin{array}{rclcl}
\max & \multicolumn{3}{l}{r'^*} \\
    r'^* & \leq & r'_u && \forall u\in \textsc{ENTRY} \\
    r'_u & \leq & r'_{\textsc{dest}(u,v)}+c'_{u,v}+B'_{\textsc{bw}(u,v)}\hspace{-.2in} && \forall (u,v)\in C^+\\
    r'_u & \leq & r'_{\textsc{dest}(u,v)}+c'_{u,v} && \forall (u,v)\in C^-\\
    b & \geq & \sum_{e \in \textsc{BW}}(1-e^{-B'_e}) & & \\
    r'_{\textsc{DA}} & = & 0 & & \\
    r'_u, r'^* & \in & [0,\infty) & & \\
    B'_e & \in & [0,-\ln(0.01)] & &
\end{array}
\end{equation*}

% \begin{equation*}
% \begin{array}{rclcl}
% \max & \multicolumn{3}{l}{r'^*} \\
%     r'^* & \leq & r'_u && \forall u\in \textsc{ENTRY} \\
%     \\
%     r'_u & \leq & r'_{\textsc{dest}(u,v)} && \forall u\in \textsc{ENTRY} \cup \textsc{SPLIT}, \\
%     & & +\quad c'_{u,v} & & \forall v\in \textsc{Successors}(u), \\
%     & & +\quad B'_{\textsc{bw}(u,v)} & & \textsc{nsp}(u,v)\text{ blockable}\\
%     \\
%     r'_u & \leq & r'_{\textsc{dest}(u,v)} && \forall u\in \textsc{ENTRY} \cup \textsc{SPLIT}, \\
%     & & +\quad c'_{u,v} & & \forall v\in \textsc{Successors}(u), \\
%     & & & & \textsc{nsp}(u,v)\text{ not blockable}\\
%     \\
%     \sum_{e \in \textsc{BW}}(1-e^{-B'_e}) & \leq & b & & \\
%     r'_u, r'^* & \in & [0,\infty) & & \\
%     B'_e & \in & [0,-\ln(0.01)] & &
% \end{array}
% \end{equation*}

Unfortunately, the above program is not linear as the budget constraint
is not linear. Furthermore, the above program is not
even convex: Since $1-e^{-x}$ is concave, the average of two feasible solutions
may violate the budget constraint.

\vspace{-.05in}
\subsection{Iterative LP based approximation heuristic}

We note that $1-e^{-x}$ is increasing. That is, as long as we push down the total of $B'_e$, we eventually will reach a situation where the budget constraint is satisfied.
That is, we rewrite the budget constraint as $b'\ge \sum_{e\in\textsc{BW}}B'_e$, which
is a linear constraint. We {\bf guess} a value for $b'$ and then solve the corresponding LP.  We verify whether the LP solution also satisfies the original nonlinear
budget constraint. If it does, then that means we could increase our guess of $b'$
(increasing $b'$ means spending more budget).
If our LP solution violates the original nonlinear budget constraint, then it
means we should decrease our guess of $b'$.  A good guess of $b'$ can be obtained via a
binary search from $0$ to $-|\textsc{BW}|\ln(0.01)$.
% $b'$ should be at least $0$. When $b'=0$, there is no budget
% for blocking at all.  $b'$ is at most $-b^U\ln(0.01)$, where $b^U$ is
% the total number of blockable edges. When $b'=-b^U\ln(0.01)$, we can block
% all edges.

\vspace{-.05in}
\subsection{MIP based approximation}

Another way to address the nonlinear budget constraint is to add the $B_e$ back
into the model ($0\le B_e\le 0.99$ for $e\in\textsc{BW}$). The budget
constraint $\sum_{e\in\textsc{BW}}B_e\le b$ is now back to linear.  Of course,
this cannot be the end of the story, since we also need to link the $B_e$ and
the $B'_e$ together.  We essentially have introduced a new set of nonlinear
constraints, which are $B'_e = -\ln(1-B_e)$ for $e\in \textsc{BW}$.

The function $-\ln(1-x)$ is close to a straight line if we focus on a small
interval.
%(Figure~\ref{fig:curve}).
If $x\in [a,b]$, the straight line
connecting $(a,-\ln(1-a))$ and $(b,-\ln(1-b))$ is an upper bound of
$-\ln(1-x)$.
A straight line representing a lower bound of $-\ln(1-x)$ is the tangent line
at the interval's mid point $\frac{a+b}{2}$.
An illustration of $-\ln(1-x)$, bounded above and below by straight lines
is provided in the appendix.

Given a specific $B_e$, we could divide $B_e$'s region $[0,0.99]$ into multiple
smaller intervals.
% For example, we could divide the region into $[0,0.25]$,
% $[0.25,0.5]$, $[0.5,0.75]$ and $[0.75,0.99]$.
For each region, we have two
straight lines that represent the lower and upper bounds on $-\ln(1-x)$.  We
could join the regions and compose $g_L(x)$ and $g_U(x)$, which are the {\em
piece-wise linear} lower and upper bounds on $-\ln(1-x)$.  We replace
$B'_e=-\ln(1-B_e)$ by $B'_e=g_L(B_e)$ or $B'_e=g_U(B_e)$, respectively, to
create two different MIP programs.
We use the {\em multiple choice model} presented in
\cite{Croxton2003:Comparison} to implement piece-wise linear constraints with the
help of auxiliary binary variables.
The program with $B'_e=g_L(B_e)$ underestimates
the $B'_e$, which results in an overestimation of the budget spending and
therefore an overestimation of the attacker's success rate.  This program
results in {\bf an achieved feasible defense}.  The program with $B'_e=g_U(B_e)$
on the other hand
% overestimates
% the $B'_e$, which results in an underestimation of the budget spending and
% therefore an underestimation of the attacker's success rate.  This program
results in {\bf a lower bound on the attacker's success rate}.  When the intervals
are fine enough,
experimentally, the achieved feasible
defense is close to the lower bound (therefore close to optimality).

% The constraint $B'_e=g_L(B_e)$ (or $B'_e=g_U(B_e)$) is not linear.
% We use the {\em multiple choice model} presented in
% \cite{Croxton2003:Comparison} to represent piece-wise linear constraints with the
% help of auxiliary variables. For each pair of $B_e$ and $B_e'$, we
% introduce $k$ binary and $k$ continuous auxiliary variables, where $k$ is the
% number of regions. This appears to be a lot of binary variables, but the number
% of valid value combinations is actually much smaller. For every set of $k$
% binary variables, only one can be positive. Experimentally, this technique
% scales well.

\section{Experiments}
All our experiments are carried out on a desktop with i7-12700 CPU and NVIDIA GeForce RTX 3070 GPU.
Our MIP solver is Gurobi 9.5.1.
For pure strategy blocking, we proposed
two algorithms: {\sc TDCycle} and {\sc IP} (integer program based on kernelization).  We
also include a third algorithm {\sc Greedy} to serve as a baseline.
{\sc Greedy} spends one unit of budget in each round for a total of $b$ rounds.
In each round, it greedily blocks one edge to maximally
decrease the attacker's success rate.  For mixed strategy blocking, we have an
iterative LP based heuristic {\sc IterLP}, a mixed integer program {\sc
MIP-F(easible)} for generating a feasible defense and a mixed integer program
{\sc MIP-LB} for generating a lower bound on the attacker's success rate.

We evaluate our algorithms using two attack graphs generated using {\sc BloodHound} team's synthetic graph generator {\sc
DBCreator}. We call the attack graph {\sc R2000} and {\sc R4000}, which are obtained by setting the
number of computers in the AD environment to $2000$ and $4000$.
{\sc R2000} contains $5997$ nodes and $18795$ edges and
{\sc R4000} contains $12001$ nodes and $45780$ edges.
We also generate a third attack graph using a different open source
synthetic graph generator {\sc adsimulator}. {\sc adsimulator} by
default generates a trivially small graph.
We increase all its default parameters by a factor of $10$ and create an
attack graph called {\sc ADS10}. {\sc ADS10} contains $3015$ nodes
and $12775$ edges.
Even though {\sc ADS10} contains less nodes, experimentally it is actually more
expensive to work with compared to {\sc R2000} and {\sc R4000}, as it is further away from a tree.
% (The tree widths of {\sc R2000}, {\sc R4000} and {\sc ADS10}
% are $11$, $29$ and $51$, respectively.)
% (11, 29, 51)
% Unfortunately, {\sc DBCreator} scales poorly --- {\em the choice of $4000$ computers
% is limited by {\sc DBCreator} itself, not by our algorithms}.

We only consider three edge types: {\sc AdminTo}, {\sc MemberOf},
and {\sc HasSession}.
These are a
   representative sample of edges types used in BloodHound.
   % (and they are
   % the only three labelled as {\em default} in the edge type selection user interface).
We set the failure rates of all edges of type {\sc HasSession} to $0.2$ (requiring
a session between an account and a computer, therefore more likely to fail)
and set the failure rates of all edges of the other two types to $0.05$.
We set the number of entry nodes to $20$. We select $40$ nodes that are
furthest away from {\sc DA} (in terms of the number of hops to reach {\sc DA})
and randomly draw $20$ nodes among them to be the entry nodes.  We define
$\textsc{Hop}(e)$ to be the minimum number of hops between an edge $e$ and {\sc
DA}.  We set {\sc MaxHop} to be the maximum value for $\textsc{Hop}(e)$.  An
edge is set to be blockable with probability
$\frac{\textsc{Hop}(e)}{\textsc{MaxHop}}$.  That is, edges further away from
{\sc DA} are set to be more likely to be blockable. Generally speaking, edges further
away from {\sc DA} tend to be about individual
employees' accesses instead of accesses between servers and admins.
We set the budget to $5$ and $10$.
All experiments are repeated $10$ times, with different random draws
of the entry nodes and the blockable edges. The numbers in the table
are the attacker's average success rates over $10$ trials. The numbers in the parenthesis are the average running time.
%in seconds.

% ip 0.04554394078440964 0.4799269735913663
% lb 0.0017112732864916324 0.2633856
% lp 1.207693403121084 0.3371997384145947
% mip1 2.5806616209913047 0.333339782827705
% mip2 2.5355205036699773 0.3350557207689093
% ip2 0.03430191727820784 0.47992697359999986
% greedy 0.14211146698798985 0.5214211799999999
% tdcycle 0.25528995229396967 0.4799271109224007

% ip 0.5480613719671965 0.37260337232209195
% lb 0.003652073512785137 0.0
% lp 1.4659755649976431 0.17873734680252487
% mip1 61.98221661567222 0.17631811979970488
% mip2 71.59942711449693 0.17877315497701368
% ip2 0.29613824719563125 0.3726033563135919
% greedy 1.4703854509629308 0.3763670479999998
\begin{footnotesize}
\begin{center}
    \begin{tabular}{ |l|l|l|l| }
 \hline
        budget=5 & {\sc R2000} & {\sc R4000} & {\sc ADS10} \\
 \hline
        {\sc Greedy} & $0.521$ ($0.04$s) & $0.376$ ($0.34$s) & $0.448$ ($4.57$s)\\
 \hline
        {\sc TDCycle} & $0.480$ ($0.10$s) & $0.373$ ($15366$s) & -  \\
 \hline
        {\sc IP} & $0.480$ ($0.01$s) & $0.373$ ($0.06$s) & $0.409$ ($0.09$s)\\
 \hline
 \hline
        {\sc IterLP} & $0.337$ ($0.37$s) & $0.180$ ($0.57$s) & $0.300$ ($1.90$s)\\
 \hline
        {\sc MIP-F} & $0.335$ ($0.11$s) & $0.179$ ($3.54$s) & $0.303$ ($4.75$s)\\
 \hline
        {\sc MIP-LB}  & $0.333$ ($0.11$s) & $0.176$ ($2.61$s) & $0.297$ ($4.98$s) \\
 \hline
\end{tabular}
\end{center}
\end{footnotesize}

\begin{footnotesize}
\begin{center}
    \begin{tabular}{ |l|l|l|l| }
 \hline
        budget=10 & {\sc R2000} & {\sc R4000} & {\sc ADS10} \\
 \hline
        {\sc Greedy} & $0.499$ ($0.07$s) & $0.376$ ($0.65$s) & $0.448$ ($8.97$s)\\
 \hline
        {\sc TDCycle} & $0.263$ ($0.18$s) & - & -  \\
 \hline
        {\sc IP} & $0.263$ ($0.01$s) & $0.117$ ($0.03$s) & $0.315$ ($0.10$s)\\
 \hline
 \hline
        {\sc IterLP} & $0.266$ ($0.39$s) & $0.033$ ($0.60$s) & $0.190$ ($1.92$s)\\
 \hline
        {\sc MIP-F} & $0.270$ ($0.02$s) & $0.023$ ($0.54$s) & $0.191$ ($11.27$s)\\
 \hline
        {\sc MIP-LB}  & $0.263$ ($0.02$s) & $0.014$ ($0.24$s) & $0.183$ ($3.62$s) \\
 \hline
\end{tabular}
\end{center}
\end{footnotesize}

\noindent{\bf Interpretation of Results:}
For pure strategy blocking, {\sc TDCycle} and {\sc IP} are both expected to
produce the optimal results. As expected, they perform better than {\sc Greedy}. {\sc TDCycle} doesn't scale for 3 out of 6 settings.
On the other hand, {\sc IP} scales exceptionally well.
As mentioned earlier, {\sc IP} scales better since it is based on a parameter
that we invent specifically for describing AD graphs.
For graphs with large number of non-splitting paths and small tree widths,
we expect {\sc TDCycle} to scale better, {\em but such graphs may not be AD graphs}.
% We have included in the appendix further discussion on the advantages of {\sc TDCycle} over {\sc IP} (i.e., embarrassingly parallel and convenient memory paging).
% After preprocessing (steps detailed in the appendix), $h$ becomes at most $20$ and $39$ for {\sc R2000} and {\sc R4000}, respectively, over $10$ trials.
% On the
% other hand, the achieved tree width values are at most $4$ and $13$ for
% {\sc R2000} and {\sc R4000}, respectively. This explains why {\sc TDCycle}
% is much slower than {\sc IP} for these two specific synthetic graphs. For other graphs with large $h$ values and small tree
% widths, we expect {\sc TDCycle} to scale relatively better.
% We have included in the appendix further discussion on the advantages of {\sc TDCycle} over {\sc IP} (i.e., embarrassingly parallel and convenient memory paging).
% Guo~\emph{et al.}~\shortcite{Guo2021:Practical} also studied edge
% Other than {\sc Budget}, our algorithms scale well.
% For pure strategy blocking, {\sc Greedy} performs worse than the optimal pure
% strategy algorithms (i.e., {\sc IP}, {\sc Budget} when it applies, and {\sc DP} when it applies). When $b=10$, {\sc Greedy} is a lot worse.
% to optimality in terms of {\em absolute difference}.
% {\sc IterLP} for $b=10$ and for {\sc R4000} is furthest away from the lower bound
For mixed strategy blocking, the attacker's success rates under both {\sc IterLP} and {\sc MIP-F(easible)} are close to {\sc MIP-LB} (lower bound on the attacker's success rate), which indicates that both heuristics are near-optimal.

\vspace{.05in}
\noindent{\bf Results on scaling {\sc TDCycle} via reinforcement learning:}
We present results on two settings: {\sc R2000} with $b=5$ and {\sc ADS10} with $b=10$.
These two are the {\em cheapest} and the {\em most expensive} among our six experimental settings.
All our experimental setups are the same as before. We directly apply {\em Proximal Policy Optimization} {\sc PPO} \cite{Schulman17:Proximal}.\footnote{We list our hyper-parameters in the appendix.}
For each environment, we use training seed $0$ to $9$ and record the best result.

\begin{footnotesize}
\begin{center}
    \begin{tabular}{ |l|l|l|l|l|l| }
 \hline
         & Opt & Greedy & RL & $=$Opt & Time \\
 \hline
        {\sc R2000}, $b=5$ & $0.480$ & $0.521$ & $0.480$ & $10/10$ & 1hr \\
 \hline
        {\sc ADS10}, $b=10$ & $0.315$ & $0.448$ & $0.319$ & $9/10$ & 4hr \\
 \hline
\end{tabular}
\end{center}
\end{footnotesize}

{\sc Opt}, {\sc Greedy}, {\sc RL} each represents the average performance of
these three different approaches (average over $10$ trials with random draws of the entry nodes and the blockable edges). ``Time'' refers to
training time.
For {\sc R2000} with $b=5$, we obtain the optimal result in $10$ out of $10$ trials
without limiting the episode length. The maximum episode length is $27$ over $10$ trials.
For {\sc ADS10} with $b=10$, we set an episode length of $15$ and obtain the optimal result in $9$ out of $10$ trials.
We recall that, for this setting, {\sc TDCycle} doesn't scale at all, but we manage to achieve near-optimal results via reinforcement learning.

\newpage

\section*{Acknowledgements}
Frank Neumann has been supported by the Australian Research Council through grant FT200100536. Hung Nguyen is supported by the Next Generation Technology Fund
   (NGTF-Cyber) grant  MyIP 10614 and the Australian Research Council
   (ARC-NISDRG) grant NI210100139.
This work was supported with supercomputing resources provided by the Phoenix HPC service at the University of Adelaide.

\bibliographystyle{unsrtnat}
\bibliography{/home/mingyu/Dropbox/nixos/mg.bib,/home/mingyu/Dropbox/nixos/mggame.bib,/home/mingyu/Dropbox/nixos/extra.bib}

\newpage

\section*{Appendix}

\subsection{An example synthetic AD graph}

\begin{figure}[h]
    \centering
  \includegraphics[width=0.5\linewidth]{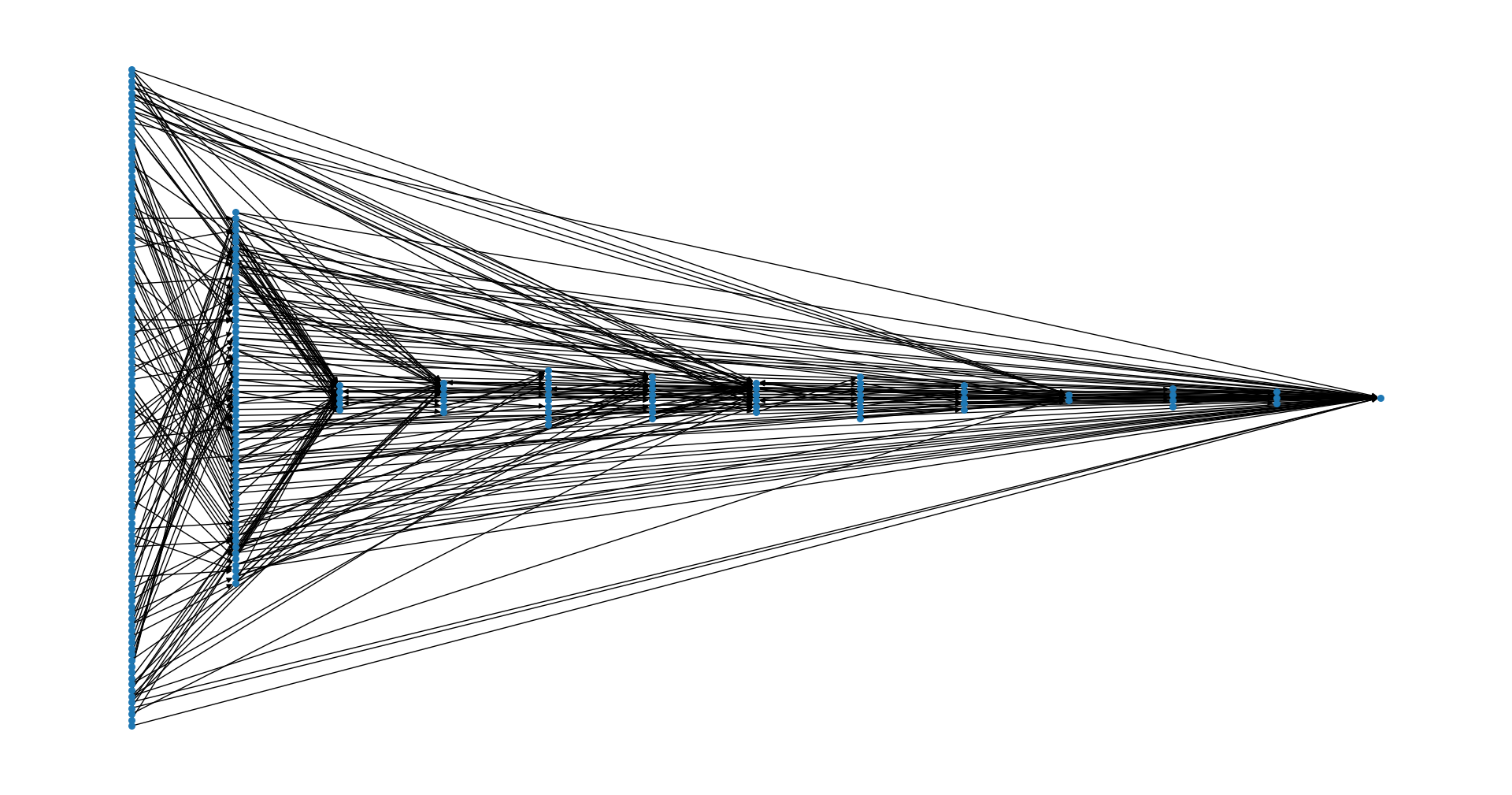}
    \caption{An example Active Directory attack graph generated using {\sc BloodHound} team's synthetic graph generator {\sc DBCreator} (setting the number of computers in the network to $1000$). Only nodes reachable to {\sc DA} (rightmost node) are shown. Every node only needs several hops to reach {\sc DA} and the whole graph is similar to a tree.}
  \label{fig:r1000}
\end{figure}

\subsection{How to optimally block $2$ edges from the example attack graph described in Figure~\ref{fig:example}}

%\begin{example}
% Consider the attack graph in Figure~\ref{fig:example}. Let the budget be $2$. One

    \begin{itemize}

        \item Optimal pure strategy defense:
One optimal pure strategy defense is to block $2\rightarrow 1$ and $5\rightarrow 1$.
After these two edges are blocked, the attacker's optimal attack path is
$4\rightarrow 1\rightarrow 0$, with a success rate $(1-0.2)(1-0.05)=0.76$.

        \item Optimal mixed strategy defense:
The optimal
mixed strategy defense is to block $2\rightarrow 1$ with probability
$0.675$, block $4\rightarrow 1$ with probability $0.634$, and block
    $5\rightarrow 1$ with probability $0.691$.  Note that $0.675+0.634+0.691=2$.  Under
this defense, there are three optimal attack paths for the attacker,
which are $3\rightarrow 2\rightarrow 1\rightarrow 0$, $4\rightarrow
1\rightarrow 0$, and $5\rightarrow 1\rightarrow 0$. All three optimal attack
paths' success rates are $0.279$.  For example, for attack path $3\rightarrow
2\rightarrow 1\rightarrow 0$, the success rate equals
$(1-0.05)(1-0.05)(1-0.675)(1-0.05)=0.279$.

    \end{itemize}
%\end{example}

\subsection{Proof of Theorem~\ref{thm:nphard}}

We only present the proof for mixed strategy blocking.  For pure strategy, we
can derive a similar proof or directly apply the result from
\cite{Baier2006:LengthBounded}, where the authors show that for directed
graphs, the {\em single-source single-destination 4-length-bounded edge cut} problem
is NP-hard.  4-length-bounded edge cut is to calculate the minimum
number of edges to block so that the distance from the source to the
destination is at least 5 (equivalently, exactly 5).

\begin{proof}[Proof for mixed strategy only]

\begin{figure}[h]
    \centering
  \includegraphics[width=0.5\linewidth]{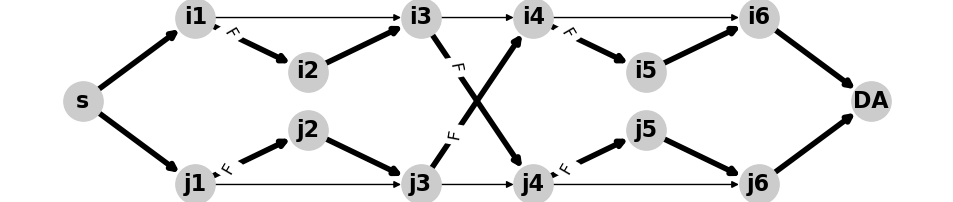}
    \caption{Proof gadget: $i1$ to $i6$ and $j1$ to $j6$ represent {\em connected} node $i$ and $j$ from a vertex cover instance}
  \label{fig:nphard}
\end{figure}

\cite{Dinur2005:Hardness}
    showed that the minimum vertex cover problem is NP-hard to approximate to within a
    factor of $1.3606$.  We show that optimal mixed strategy blocking
    is at least as hard as approximating vertex cover to within
    a factor of $1.3606$ by reducing vertex cover to mixed strategy blocking.

    % if we are able to solve the optimal mixed
    % strategy blocking problem in polynomial time, then we can approximate
    % minimum vertex cover to within a factor that is arbitrarily close to $1$, which is a
    % contradiction.

Let $G(V',E')$ be a vertex cover instance. $G'$ is an undirected graph.  Let
    $n'=|V'|$ and let $k^*$ be the minimum vertex cover number.  We construct
    an attack graph as follows.  We first construct a single entry node $s$ and
    a single destination {\sc DA}.  For every node $i \in V'$, we construct $6$
    nodes and $9$ edges in the attack graph. For $i \in V'$, the $6$ nodes are
    $i1$ to $i6$, and the $9$ edges are $s\rightarrow i1$, $i6\rightarrow
    \textsc{DA}$, and the $7$ edges among $i1$ to $i6$ (as shown in
    Figure~\ref{fig:nphard}).
    % Similarly, for $j \in V'$, the constructed nodes
    % are $j1$ to $j6$, and the constructed edges are $s\rightarrow j1$, $j6\rightarrow
    % \textsc{DA}$, and the $7$ edges among $j1$ to $j6$.
    For every edge $(i,j)\in E'$, we construct
    $2$ edges in the attack graph. If $i$ and $j$ are connected by an edge in $G'$,
    we construct $i3\rightarrow j4$ and $j3\rightarrow i4$.
    The thick edges are not blockable and the thin edges are blockable. Edges marked with
    the label {\bf F} are called {\em broken edges}. Broken edges have a failure rate of $1-\epsilon^4$,
    where $\epsilon=\frac{1}{n'}$.
    Every other edge has a failure rate of $0$.

    % If we can solve the optimal mixed strategy blocking problem in
    % polynomial time, then for the above graph, we can, in polynomial time, find
    % the minimum unit of budget $b^*$ that ensures that the attacker's success rate is
    % at most $\epsilon^8$.
    We define $b^*$ to be the minimum budget where under the optimal mixed
    strategy defense,
    the attacker's success rate is at most $\epsilon^8$.
    We first show that $b^*\le n'+k^*$. Assuming that
    an oracle provides us with a minimum vertex cover solution with $k^*$ vertices, we can convert this vertex cover
    solution to a pure strategy defense that uses a budget of $n'+k^*$
    and can ensure that the attacker's success rate is at most $\epsilon^8$.
    Suppose
    $i$ and $j$ are connected and only $i$ belongs to the minimum vertex cover,
    then for $i$ we spend two units of budget (block $i1\rightarrow i3$ and
    $i4\rightarrow i6$) and for $j$ we spend one unit of budget (only block
    $j3\rightarrow j4$).  With the above blocking, all attack
    paths must pass through at least two broken edges. Therefore, by blocking
    $2\cdot k^* + 1\cdot (n'-k^*) =n'+k^*$ edges, the attacker's success rate is at most $\epsilon^8$. Mixed strategy is more general than pure strategy. By the definition
    of $b^*$, we must have $b^*\le n'+k^*$.

    Now let us consider the optimal mixed strategy blocking solution
    corresponding to budget $b^*$.
    % The assumption is that we can find this
    % solution in polynomial time.
    We show that we can convert this optimal
    mixed strategy solution to a pure strategy solution by increasing the
    budget spending by at most $\frac{1}{1-\epsilon^2}$.  The resulting pure
    strategy blocking solution corresponds to a valid vertex cover solution
    that is arbitrarily close to optimality. This essentially means that if we
    can solve the mixed strategy blocking problem, then we can approximate
    vertex cover to within a factor that is arbitrarily close to $1$, which
    shows optimal mixed strategy blocking is at least as hard as the known
    NP-hard problem of approximating
    vertex cover to within a factor of $1.3606$.

    Under the optimal mixed strategy defense with budget $b^*$, we call an edge {\em nearly blocked} if the
    defender blocks it with at least $1-\epsilon^2$ probability. For node $i$,
    we define $L(i)$ to be $1$ if the {\em left} edge $i1\rightarrow i3$ is
    nearly blocked. $L(i)=0$ otherwise.
    Similarly, we define $M(i)$ to be $1$ if the {\em middle} edge $i3\rightarrow i4$ is
    nearly blocked and $R(i)$ to be $1$ if the {\em right} edge $i4\rightarrow i6$ is
    nearly blocked.
    % We observe that for every node $i$, $L(i)+M(i)+R(i)\ge 1$. Otherwise, it is
    % trivial to find a path with a success rate that is more than $\epsilon^8$.
    % For example, the attacker can go from $s$ to $i2$, then pass through
    % $i3\rightarrow i4$ and $i4\rightarrow i6$. If neither $i3\rightarrow i4$
    % nor $i4\rightarrow i6$ is nearly blocked, then the attacker's success rate
    % is strictly higher than $\epsilon^8$.  That is, for every node, at least
    % one of its three blockable edges is nearly blocked.
    For two connected nodes $i$ and $j$ in $G'$, we can prove six
    inequalities:
    1) $L(i)+R(j)\ge 1$;
    2) $L(j)+R(i)\ge 1$;
    3) $L(i)+M(i)\ge 1$;
    4) $M(i)+R(i)\ge 1$;
    5) $L(j)+M(j)\ge 1$;
    6) $M(j)+R(j)\ge 1$.
    Let us take $L(i)+R(j)\ge 1$ as an example. If neither $L(i)$ nor $R(j)$
    is nearly blocked, then the success rate by going through $s\rightarrow i1\rightarrow i3\rightarrow j4\rightarrow j6\rightarrow \textsc{DA}$ is more than $\epsilon^8$.
    The above six inequalities imply two things.  First,
    $L(i)+M(i)+R(i)+L(j)+M(j)+R(j)\ge 3$.  That is, for $i$ and $j$ that are
    connected in $G'$, at least one of $i$ and $j$ have at least two edges nearly blocked.
    If node $i$ has at least two edges
    that are nearly blocked, then we can modify the optimal mixed strategy and change to fully block $i1\rightarrow i3$
    and $i4\rightarrow i6$ instead (budget on other edges reduced to $0$). This way, the attacker can never gain a success
    rate that is strictly higher than $\epsilon^8$ by using any of $i$'s edges. By
    changing from nearly blocked to fully blocked, we increase the budget spending by
    at most $\frac{1}{1-\epsilon^2}$.  For node $j$,
    due to $L(j)+M(j)\ge 1$ and $M(j)+R(j)\ge 1$, we have that
    at least one edge corresponding to $j$ is nearly blocked and if there is exactly one edge nearly
    blocked, then that edge must be the middle one $j3\rightarrow j4$.
    We again can increase our budget spending
    by at most $\frac{1}{1-\epsilon^2}$ by changing the middle edge from nearly
    blocked to fully blocked (and reduce the spending on the other two edges to $0$).
    Essentially, given the optimal mixed strategy defense, we can increase
    the budget spending by at most $\frac{1}{1-\epsilon^2}$ in order to convert
    to a pure strategy defense that corresponds to a
    feasible vertex cover solution. Thus, we have
     $\frac{b^*}{1-\epsilon^2}\ge n'+k^*$ and
     $\frac{b^*}{1-\epsilon^2}-n'$ is an achieved vertex cover number.
     Earlier we proved that $b^*-n'$ is a lower bound of the vertex cover number.
     The approximation ratio is at most
     $\frac{\frac{b^*}{1-\epsilon^2}-n'}{b^*-n'}$, which becomes strictly
     less than $1.3606$ when
     $n'$ reaches a constant threshold.
     The mathematical details are presented below.
     For small $n'$ values that are below the constant threshold, we could brute force
     to calculate the optimal vertex cover solution, which is also
     within a factor of $1.3606$. That is, optimal mixed strategy blocking
     is at least as hard as the known NP-hard problem of approximating vertex cover to within a factor of $1.3606$.

     Below we show the details on the constant threshold:
     \[\frac{\frac{b^*}{1-\epsilon^2}-n'}{b^*-n'}=\frac{\frac{b^*}{1-\frac{1}{n'^2}}-n'}{b^*-n'}=\frac{\frac{b^*n'^2}{n'^2-1}-n'}{b^*-n'}\]
     \[=\frac{b^*n'^2-n'(n'^2-1)}{(b^*-n')(n'^2-1)}=\frac{b^*n'^2-n'^3+n'}{b^*n'^2-n'^3+n'-b^*}\]
     \[=1+\frac{b^*}{b^*n'^2-n'^3+n'-b^*}\]
     \[=1+\frac{b^*}{(b^*-n')(n'^2-1)}\]

     The above expression is monotone in $b^*$
     and is maximized when $b^*$ takes the minimum possible value.
     We know that $b^*$ is at least $(1-\epsilon^2)(n'+k^*)\ge (1-\frac{1}{n'^2})(n'+1)$.
     ($k^*$ is at least $1$ as it is the vertex cover number.)
     The approximation ratio is then at most
     \[1+\frac{(1-\frac{1}{n'^2})(n'+1)}{((1-\frac{1}{n'^2})(n'+1)-n')(n'^2-1)}\]
     \[=1+\frac{(n'^2-1)(n'+1)}{((n'^2-1)(n'+1)-n'^3)(n'^2-1)}\]
     \[=1+\frac{n'+1}{(n'^2-1)(n'+1)-n'^3}\]
     \[=1+\frac{n'+1}{n'^2-n'-1}\]

     The above expression approaches $1$ as $n'$ approaches infinity.
     There exists a constant threshold where the above is at most $1.3606$
     when $n'$ is above the threshold.

     % This contradicts with the fact that the vertex cover problem is
     % NP-hard to approximate to within a factor of $1.3606$.
\end{proof}

% Based on the above theorem, small $l$ does not guarantee efficient algorithms.  But a small
% $l$ combined with additional assumptions does lead to fixed-parameter tractable
% algorithms. As a starter, below is an FPT algorithm where we also assume that the budget $b$
% is small. It produces the optimal pure strategy defense.

% \noindent {\bf FPT algorithm assuming a small $l$ and a small $b$ ({\sc Budget}):} We start without any defense.  We run Dijkstra to calculate the attacker's
% optimal attack path $p$.  Based on the observation from Golovach and
% Thilikos~\cite{Golovach2011:Paths}, we must block at least one edge from this
% path. There are at most $l$ edges in $p$, so there are $l$ options for the
% first unit of budget spent. At least one of these $l$ options is correct (is
% part of the optimal solution).  Let $G-e$ be the graph with
% an edge $e$ (from the optimal attack path) blocked.  We apply the same idea again on
% $G-e$.  We repeat until all $b$ units of budget are spent.  Altogether we have to
% consider $l^b$ options.  The attacker's optimal attack path and the success rate
% can be solved for using Dijkstra's algorithm.  The overall complexity is then
% $O(l^b\textsc{poly}(n))$.

\subsection{Tree decomposition background}

\begin{definition}[Tree Decomposition and Tree Width] Let $G(V,E)$ be an undirected graph.
    % with $n=|V|$ vertices and $m=|E|$ edges.
    Let $T$ be a tree with $t$ tree nodes ($T_1,T_2,\ldots,T_t$), where every tree node is a bag of graph vertices (subset of $V$).
    % For presentation purposes, we use {\em vertices} to refer to the graph vertices (elements of $V$) and we use {\em nodes} to refer to the tree nodes (the $T_i$, which are sets of $G$'s vertices).
    $T$ is a {\em tree decomposition} of $G$ iff

    \begin{itemize}

        \item The union of the $T_i$ equals $V$.

        \item For every edge $e\in E$, there exists at least one $T_i$ such that both ends of $e$ are in $T_i$.

        \item For every vertex $v\in V$, let $T(v)$ be the set of tree nodes containing
            $v$. The subgraph of $T$ induced by $T(v)$ must be connected (must form a
            tree).

    \end{itemize}

    The {\em tree width} of $G$ (under the tree decomposition $T$) equals
    the maximum size/cardinality among the $T_i$, then minus $1$.

\end{definition}

AD graphs are directed, but we treat AD graphs as undirected for the purpose
of applying tree decomposition. For all AD graphs used
in this paper (generated using {\sc DBCreator} and {\sc adsimulator}), between any pair of directly connected vertices $u$ and $v$,
either $u\rightarrow v$ or $v\rightarrow u$ exists (never both).
Hence treating the graphs as undirected does cause any ambiguity or lose
any information,
as we can always refer back to the original AD graph to query the edge directions.
(Even if
both $u\rightarrow v$ and $v\rightarrow u$ exist,
we simply need to add one auxiliary vertex in between to remove any ambiguity caused.
For example, keep $u\rightarrow v$ and add $x$ in between $v\rightarrow u$
so that it becomes $v\rightarrow x\rightarrow u$.)

The optimal tree decomposition with the minimum tree width is NP-hard to
compute~\cite{Arnborg1987:Complexity}. In our experiments, we adopt the {\em
vertex elimination} heuristic for generating tree
decomposition~\cite{Bodlaender2006:Exact}.
This heuristic maps a permutation of the graph nodes into a tree decomposition.
The pseudocode is included.

\begin{algorithm}[h!]
\caption*{Vertex elimination heuristic for tree decomposition}
\textbf{Input}: An undirected graph $G$ and an arbitrary permutation of the graph vertices
\begin{algorithmic}[1]
\FOR {vertex $i$ in $G$ ordered by the permutation}
    \STATE let $N(i)$ be $i$'s neighbours under the current graph
    \STATE create tree node $T_i = \{i\}\cup N(i)$
    \FOR {every pair of vertices $a,b\in N(i)$}
        \STATE add edge $(a,b)$ to $G$
    \ENDFOR
    \STATE remove $i$ from $G$
\ENDFOR
\FOR {$T_i$ from $1$ to $n$}
    \STATE let $j=\max\{ T_i/\{i\}\}$
    \STATE connect tree node $T_i$ to $T_j$
\ENDFOR
\end{algorithmic}
\end{algorithm}

In our experiments, we use two heuristic orderings of the vertices. One
is {\em minimum degree} (i.e., the next node is the node with the minimum
degree in the current graph). The other is {\em minimum fill in} (i.e.,
the next node is the node that requires the minimum number of added edges).
We try both heuristics and pick the tree decomposition with the smaller tree width.
It should be noted that our algorithm works with any tree decomposition.
It is just that in our experiments, we used the above two heuristics.

Figure~\ref{fig:tdcyclegraph} is an example attack graph, which will be used
as a running example for {\sc TDCycle}.

\begin{figure}[h!]
    \centering
  \includegraphics[width=0.5\linewidth]{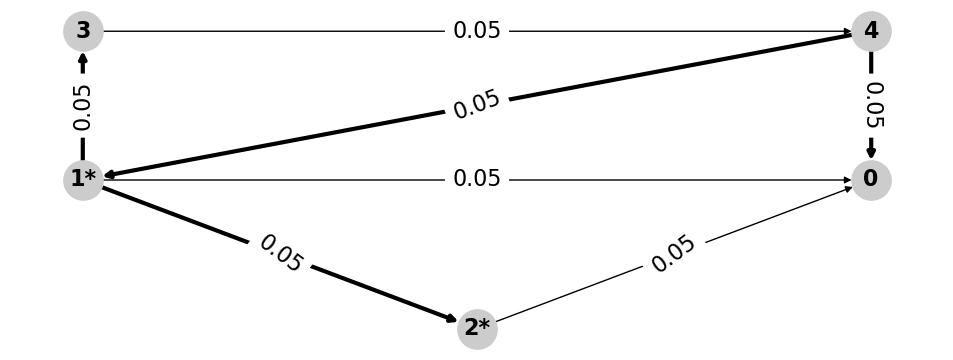}
    \caption{Example attack graph. Vertex $0$ is {\sc DA}. Vertex $1,2$ are entry vertices (marked using $*$).
    Edge labels represent the edges' failure rates.
    Thick edges (i.e., $1\rightarrow 2$) are not blockable.
    We set a budget of $2$.
    }
  \label{fig:tdcyclegraph}
\end{figure}

Figure~\ref{fig:tdcycleraw} is a tree decomposition of Figure~\ref{fig:tdcyclegraph}.

\begin{figure}[h!]
    \centering
  \includegraphics[width=0.5\linewidth]{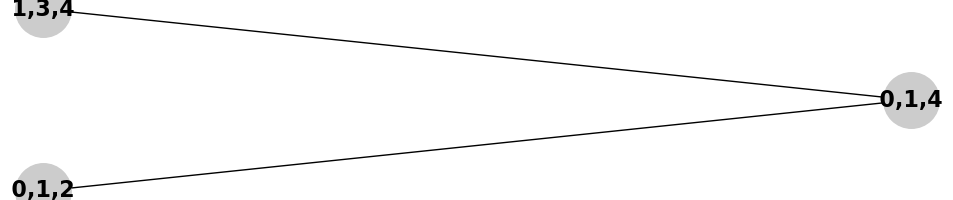}
    \caption{Tree decomposition of Figure~\ref{fig:tdcyclegraph}.
    Both {\em minimum degree} and {\em minimum fill in} result
    in the same tree decomposition.}
  \label{fig:tdcycleraw}
\end{figure}

We then convert the resulting tree decomposition into a {\em nice tree
decomposition}~\cite{Cygan2015:Parameterized}, as illustrated in
Figure~\ref{fig:tdcycletd}. The root node is a bag containing {\sc DA} only and
all the leaf nodes are bags of size one.\footnote{The standard nice tree decomposition definition involves
empty bags. For example, a leaf node in our illustration
contains one vertex, which can be interpreted as an introduce node
by adding an auxiliary empty bag as its child. We ignore empty bags
as they play no role in our DP.}
In a nice tree decomposition, there are only three types of nodes.

\begin{itemize}

    \item {\bf Introduce node}: An introduce node $X$ has an only child $X'$.
        $X'\subsetneq X$ and $X\setminus X'$ has size $1$.
        For example, in Figure~\ref{fig:tdcycletd}, $(3,4)$ is an introduce node.

    \item {\bf Forget node}: A forget node $X$ has an only child $X'$.
        $X\subsetneq X'$ and $X'\setminus X$ has size $1$.
        For example, in Figure~\ref{fig:tdcycletd}, $(1,4)$ is a forget node.

    \item {\bf Join node}: A join node $X$ has two children. $X$ must be identical
        to both children.
        For example, in Figure~\ref{fig:tdcycletd}, $(0,1,4)$ is a join node.

\end{itemize}

It is a trivial task to convert a tree decomposition into a nice tree
decomposition.  For example, in Figure~\ref{fig:tdcycleraw}, $(0,1,4)$ splits
into two branches.  We just need to add two clones of it to make it a valid
join node.  As another example, in Figure~\ref{fig:tdcycleraw}, $(1,3,4)$ is
connected to $(0,1,4)$ directly. By inserting $(1,4)$ in between, $(0,1,4)$
(top-left one among the three clones) becomes an introduce node and $(1,4)$
becomes a forget node.
A tree decomposition with tree width $w$
can be converted to a nice tree decomposition with $O(wn)$ nodes.

The whole point of nice tree decomposition is to divide the nodes into three
categories, so that when we design DP, we just need to come up with three rules
(one rule for each category of nodes).

\begin{figure}[h]
    \centering
  \includegraphics[width=0.5\linewidth]{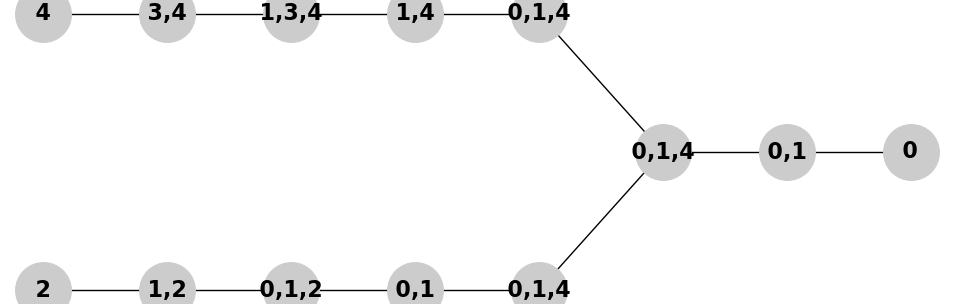}
    \caption{Nice tree decomposition built upon Figure~\ref{fig:tdcycleraw}.}
  \label{fig:tdcycletd}
\end{figure}

\subsection{{\sc TDCycle} example based on Figure~\ref{fig:tdcycletd}}

Before discussing {\sc TDCycle} in the context of Figure~\ref{fig:tdcycletd},
we first prove Lemma~\ref{lm:forget1to1}.

\begin{proof}[Proof of Lemma~\ref{lm:forget1to1}]
Let $(u,v)$ be an edge in the attack graph.
In any tree decomposition, there must be a bag $X$ that contains both $u$ and $v$.
We check the parent of $X$. If the parent also contains both $u$ and $v$,
then we keep going up, until one of $u$ or $v$ is forgotten. (Both cannot be
forgotten at the same node, since exactly one is forgotten at a forget node under
a nice tree decomposition.) The forget node where either $u$ or $v$ is forgotten
is the node we assign $(u,v)$ to.
The above forget node is unique for the following reasons.
A known property of tree decomposition is that a node is forgotten at a unique forget node.
The forget node for $u$ and $v$ must satisfy that either the forget node
    for $u$ is an ancestor of the forget node for $v$ (in which
    case $(u,v)$ is assigned to the forget node for $v$), or the reverse.
    This is because the subgraphs induced by $u$ or $v$ must be trees.
That is, at the above unique forget node, we decide whether or not to block $(u,v)$.
\end{proof}

Recall that in our DP, for every node $X$, there is a corresponding subproblem $DP(X)$,
which returns the collection of all possible tuples (distance matrix achieved
and budget spent) after processing $St(X)$, where $St(X)$ is the subtree rooted
at $X$. In our example, all edges have the same failure rate $0.05$, which
translates to a distance of $\delta=-\ln(1-0.05)$.

\begin{itemize}

    \item Base cases: There are only two leaf nodes $(4)$ and $(2)$.
        At leaf nodes, we have not put back any edges at all.
        $4$ is not an entry vertex, so after processing $(4)$, the
        only possible tuple is $([\infty],0)$.
        That is,
        \[DP((4))=\{([\infty],0)\}\]
        Similarly, since $2$ is an entry vertex, we have
        \[DP((2))=\{([0],0)\}\]

    \item Original problem: We have set a budget of $2$.
        The root node is $(0)$. After processing all nodes, we have
        \[DP((0))=\{([3\delta],2), ([\delta],0)\}\]
    That is, after processing all nodes in the nice tree decomposition,
        we end up with two possible tuples.
        We can spend $2$ units of budget
        to ensure that the attacker's distance to $0$ ({\sc DA}) is $3\delta$,
        which translates to a success rate of $(1-0.05)^3$.
        We can also spend $0$ units of budget
        to ensure that the attacker's distance to $0$ ({\sc DA}) is $\delta$,
        which translates to a success rate of $1-0.05$.
        Given a budget of $2$, we conclude that the attacker's worst success
        rate is $(1-0.05)^3$.

        In our implementation, we have a micro-optimisation step that filtered
        out $([\delta],1)$ as this tuple is worse off than $([\delta],0)$.
        Given two different tuples $(M_1,b_1)$ and  $(M_2,b_2)$, if $b_2\ge
        b_1$ and $M_1\ge M_2$ (every element of $M_1$ is larger or equal), then
        $(M_2,b_2)$ is discarded.

    \item Introduce node: $(3,4)$ is an introduce node. $3$ is introduced
        at this node, but its edges have not been put back yet, so $3$
        is disconnected. $3$ is also not an entry vertex.
        Since $DP((4))=\{([\infty],0)\}$, we have
        \[DP((3,4))=\{\left(
    \begin{bmatrix}
        \infty  & \infty \\
        \infty  & \infty \\
    \end{bmatrix}
        ,0\right)\}\]

    $(1,3,4)$ is another introduce node. $1$ is introduced
        at this node. Its edges have not been put back yet, but $1$ is
        an entry vertex.
        So we have
        \[DP((1, 3,4))=\{\left(
    \begin{bmatrix}
        0 & \infty  & \infty \\
        \infty & \infty  & \infty \\
        \infty & \infty  & \infty \\
    \end{bmatrix}
        ,0\right)\}\]
        The $0$ in the above matrix (top-left corner) indicates that the distance from $1$ to
        an entry vertex is $0$ (as $1$ is an entry vertex itself).

    \item Forget node: $(1,4)$ is a forget node. $3$ is forgotten here.
        That is, we need to consider putting back some edges.
        The two edges assigned to this forget node is $1\rightarrow 3$
        and $3\rightarrow 4$. We must put back $1\rightarrow 3$ as it is
        not blockable. For $3\rightarrow 4$, we either block it or not.
        If we do not block $3\rightarrow 4$, then we end up with the following
        possible tuple:
        \[\left(\begin{bmatrix}
        0 & 2\delta  \\
        \infty & 2\delta \\
    \end{bmatrix}
        ,0\right)\]

        The interpretation is that after putting back both $1\rightarrow 3$
        and $3\rightarrow 4$, the end result is as follows:
        \begin{itemize}

            \item top left $0$: it takes $0$ hops to go from an entry vertex to $1$

            \item top right $2\delta$: it takes $2$ hops to go from $1$ to $4$

            \item bottom left $\infty$: we cannot go from $4$ to $1$ (note: the edge $4\rightarrow 1$ has not been put back yet)

            \item bottom right $2\delta$: it takes $2$ hops to go from an entry
                vertex to $4$

        \end{itemize}

        If we block $3\rightarrow 4$, then we end up with the following
        possible tuple:
        \[\left(\begin{bmatrix}
        0 & \infty\\
        \infty & \infty\\
    \end{bmatrix}
        ,1\right)\]

        In summary, we have
        \[DP((1,4))=\left\{\left(\begin{bmatrix}
        0 & 2\delta  \\
        \infty & 2\delta \\
    \end{bmatrix}
        ,0\right),
        \left(\begin{bmatrix}
        0 & \infty\\
        \infty & \infty\\
    \end{bmatrix}
        ,1\right)\right\}\]

    \item Join node:
        In Figure~\ref{fig:tdcycletd}, there are three clones of
        $(0,1,4)$.
        We use $(0,1,4)^R$, $(0,1,4)^{T}$ and $(0,1,4)^{B}$
        to denote the one on the right, top and bottom, respectively.

        We directly present the values of $DP((0,1,4)^T)$ and $DP((0,1,4)^B)$.

$DP((0,1,4)^T)$ is
        \[
        \left\{\left(\begin{bmatrix}
            \infty & \infty & \infty \\
            \infty & 0 & 2\delta  \\
            \infty & \infty & 2\delta \\
    \end{bmatrix}
        ,0\right),
        \left(\begin{bmatrix}
            \infty & \infty & \infty \\
            \infty & 0 & \infty\\
            \infty & \infty & \infty\\
    \end{bmatrix}
        ,1\right)\right\}\]

% (frozenset({0, 1, 4}), 2): {((0.051293,
%                                1000000,
%                                1000000,
%                                0.102587,
%                                0.0,
%                                1000000,
%                                1000000,
%                                1000000,
%                                1000000),
%                               0),
%                              ((1000000,
%                                1000000,
%                                1000000,
%                                1000000,
%                                0.0,
%                                1000000,
%                                1000000,
%                                1000000,
%                                1000000),
%                               1)}}

$DP((0,1,4)^B)$ is
        \[\left\{\left(\begin{bmatrix}
            \delta & \infty & \infty \\
            2\delta & 0 & \infty\\
            \infty & \infty & \infty\\
    \end{bmatrix}
        ,0\right),
        \left(\begin{bmatrix}
            \infty & \infty & \infty \\
            \infty & 0 & \infty\\
            \infty & \infty & \infty\\
    \end{bmatrix}
        ,1\right)\right\}\]

 % (frozenset({0, 1, 4}), 0): {((0.051293,
 %                               1000000,
 %                               1000000,
 %                               0.102587,
 %                               0.0,
 %                               0.102587,
 %                               1000000,
 %                               1000000,
 %                               0.102587),
 %                              0),
 %                             ((0.051293,
 %                               1000000,
 %                               1000000,
 %                               0.102587,
 %                               0.0,
 %                               1000000,
 %                               1000000,
 %                               1000000,
 %                               1000000),
 %                              1),
 %                             ((1000000,
 %                               1000000,
 %                               1000000,
 %                               1000000,
 %                               0.0,
 %                               0.102587,
 %                               1000000,
 %                               1000000,
 %                               0.102587),
 %                              1),
 %                             ((1000000,
 %                               1000000,
 %                               1000000,
 %                               1000000,
 %                               0.0,
 %                               1000000,
 %                               1000000,
 %                               1000000,
 %                               1000000),
 %                              2)},

        $DP((0,1,4)^R)$ is then the aggregation of the above.
        We take one tuple $(M^T,b^T)$ from $DP((0,1,4)^T)$
        and one tuple $(M^B,b^B)$ from $DP((0,1,4)^B)$.
        We add $(\min(M^T,M^B),b^T+b^B)$ into $DP((0,1,4)^R)$, provided
        that $b^T+b^B\le b$.
        $DP((0,1,4)^R)$ is

        \[\left\{\left(\begin{bmatrix}
            \delta & \infty & \infty \\
            2\delta & 0 & 2\delta \\
            \infty & \infty & 2\delta\\
    \end{bmatrix}
        ,0\right),
        \left(\begin{bmatrix}
            \delta & \infty & \infty \\
            2\delta & 0 & \infty\\
            \infty & \infty & \infty\\
    \end{bmatrix}
        ,1\right),\right.\]
        \[\left.\left(\begin{bmatrix}
            \infty & \infty & \infty \\
            \infty & 0 & 2\delta\\
            \infty & \infty & 2\delta \\
    \end{bmatrix}
        ,1\right),
        \left(\begin{bmatrix}
            \infty & \infty & \infty \\
            \infty & 0 & \infty\\
            \infty & \infty & \infty\\
    \end{bmatrix}
        ,2\right)\right\}\]

\end{itemize}

\subsection{Pseudocode of {\sc TDCycle}}

We present the pseudocode of {\sc TDCycle} in this section.

\begin{algorithm}[h!]
\caption*{{\sc TDCycle} base cases setup}
\textbf{Input}: Nice tree decomposition {\sc TD}\\
\begin{algorithmic}[1]
\FOR {leaf node $X=\{x\}$ in {\sc TD}}
    \IF {$x$ is an entry vertex}
        \STATE $DP(X)=\{([0],0)\}$
    \ELSE
        \STATE $DP(X)=\{([\infty],0)\}$
    \ENDIF
\ENDFOR
\end{algorithmic}
\end{algorithm}

\begin{algorithm}[h!]
\caption*{{\sc TDCycle} final step after finishing DP}
    \textbf{Input}: $DP((\textsc{DA}))$
\begin{algorithmic}[1]
    \FOR {$([d_{\textsc{DA,DA}}],b')$ in $DP((\textsc{DA}))$}
    \STATE Keep track of the maximum $d_{\textsc{DA,DA}}$
\ENDFOR
    \STATE Return $1-e^{-d_{\textsc{DA,DA}}}$
\end{algorithmic}
\end{algorithm}

\begin{algorithm}[h!]
\caption*{{\sc TDCycle} at introduce node}
\textbf{Input}: Nice tree decomposition {\sc TD}\\
Introduce node $X=(x_1,\ldots,x_k,y)$\\
$X$'s child $X'=(x_1,\ldots,x_k)$
\begin{algorithmic}[1]
\IF {$y$ is an entry vertex}
    \STATE $d_{yy}=0$
\ELSE
    \STATE $d_{yy}=\infty$
\ENDIF
\FOR {possible tuple in $DP(X')$}
    \STATE convert the tuple on the left (from $DP(X')$) to the tuple on the right
    (added into $DP(X)$):
\begin{footnotesize}
\[
    \left(\begin{bmatrix}
        d_{11} & \ldots & d_{1k}\\
        \ldots\\
        d_{k1} & \ldots & d_{kk}\\
    \end{bmatrix},b'\right)
    \rightarrow
    \left(\begin{bmatrix}
        d_{11} & \ldots & d_{1k} & \infty\\
        \ldots\\
        d_{k1} & \ldots & d_{kk} & \infty\\
        \infty  & \ldots & \infty & d_{yy} \\
    \end{bmatrix},b'\right)
\]
\end{footnotesize}
\ENDFOR
\end{algorithmic}
\end{algorithm}

\begin{algorithm}[h!]
\caption*{{\sc TDCycle} at forget node}
\textbf{Input}: Nice tree decomposition {\sc TD}\\
Forget node $X=(x_2,\ldots,x_k)$\\
$X$'s child $X'=(x_1,\ldots,x_k)$
\begin{algorithmic}[1]
\STATE generate all possible blocking policies for edges
    between $x_1$ and one of $x_2,\ldots,x_k$ (at most $2^{k-1}$ options)
    \FOR {blocking policy with budget $b''$}
    \FOR {possible tuple $(M,b')$ in $DP(X')$}
    \IF {$b'+b''>b$}
    \STATE continue
    \ENDIF
    \STATE convert the tuple on the left (from $DP(X')$) to the tuple on the right
    (added into $DP(X)$):
\begin{footnotesize}
\[
    \left(\begin{bmatrix}
        d_{11} & \ldots & d_{1k} \\
        \ldots\\
        d_{k1} & \ldots & d_{kk} \\
    \end{bmatrix},b'\right)
    \rightarrow
    \left(\begin{bmatrix}
        d_{22}' & \ldots & d_{2k}'\\
        \ldots\\
        d_{k2}' & \ldots & d_{kk}'\\
    \end{bmatrix},b'+b''\right)
\]
\end{footnotesize}

    \STATE call {\em all-pair shortest path} routine to calculate the $d_{ij}'$,
    which are updated distances considering the newly put back edges under
    the current blocking policy.
\ENDFOR
\ENDFOR
\end{algorithmic}
\end{algorithm}

\begin{algorithm}[h!]
\caption*{{\sc TDCycle} at join node}
\textbf{Input}: Nice tree decomposition {\sc TD}\\
Join node $X$\\
Two children $X_1$ and $X_2$
\begin{algorithmic}[1]
\FOR {possible tuple $(M_1,b_1)$ in $DP(X_1)$}
\FOR {possible tuple $(M_2,b_2)$ in $DP(X_2)$}
    \IF {$b_1+b_2\le b$}
    \STATE add $(M',b_1+b_2)$ into $DP(X)$, where $M'$
is the element-wise minimum between $M_1$ and $M_2$
    \ENDIF
\ENDFOR
\ENDFOR
\end{algorithmic}
\end{algorithm}

\subsection{Complexity of {\sc TDCycle}}

\begin{proof}[Proof of Theorem~\ref{thm:tdcycle}]
There are $O(wn)$ DP subproblems in total, which fall into three categories as follows.
    The return value of $DP(X)$ is a collection. We use a binary array to represent
    this collection. The size of the array is $H^{w^2}(b+1)$, or simply $O(H^{w^2}b)$.\footnote{In experiments, we used Python's builtin set instead, because most tuples are not possible.}

    \begin{itemize}

        \item Introduce node: We need to go through the child node's output array
            (at most $O(H^{w^2}b)$ tuples)
            and expand the distance matrix by one row and one column. The complexity
            is $O(w^2)$ for each tuple.
            There are at most $O(wn)$ introduce nodes. So the complexity for handling all
            introduce nodes is $O(H^{w^2}bw^3n)$.

        \item Forget node:
        We need to go through the child node's output array
            (at most $O(H^{w^2}b)$ tuples)
            and shrink the distance matrix by one row and one column. The complexity
            is $O(2^ww^3)$ for each tuple as we need to run an all pair shortest path
            for each blocking option and there are $O(2^w)$ options.
            There are at most $O(n)$ forget nodes (a vertex is forgotten once). So the complexity for handling all
            forget nodes is $O(2^wH^{w^2}bw^3n)$.

        \item Join node:
        We need to go through both children nodes' output arrays
            (at most $O(H^{2w^2}b^2)$ pairs of tuples).
            We need to perform an element-wise min operation on the distance
            matrices, which has a complexity of $O(w^2)$.
            There are at most $O(n)$ join nodes.
            So the complexity for handling all
            join nodes is $O(H^{2w^2}b^2w^2n)$.

We do not need to run an all-pair shortest path update at a join node.  If the
            join node has a bag size of $1$, then all-pair shortest path is not
            necessary.  If the bag size is more than $1$, then there will be
            forget nodes between the join node and root. We could wait until we
            reach a forget node to run the all-pair shortest path process.

    \end{itemize}

    Assuming $H$ is at least $2$, the overall complexity is then
            \[O(2^wH^{w^2}bw^3n+H^{2w^2}b^2w^2n)\]
            \[=O((2^ww+H^{w^2}b)H^{w^2}bw^2n)=O(H^{2w^2}b^2w^2n)\]

\end{proof}

It should be noted that the above is the worst case complexity.
In experiments, the set size of $DP(X)$ is often quite small (i.e.,
in the hundreds for {\sc R2000}).
% For example, for {\sc R2000}, $DP(X)$ is only in the hundreds.
% So for a join node, we just need to go over ten thousand
% pairs of matrices.

% We use {\sc TD} to denote the resulting nice tree decomposition.
% {\sc TD} has $O(wn)$ nodes where $w$ is the tree width of {\sc TD}.

% Lastly, {\sc TDCycle} can be converted to a much more efficient heuristic by
% ignoring the non-diagonal elements of the distance matrices.  That is, if a bag
% is $X=(x_1,x_2,\ldots,x_k)$, we only track the distance from the closest entry
% node to the $x_i$. We do not track the distances between the $x_i$.  In a tree
% decomposition, the $x_i$ separate one region of the graph (formed by the forget
% nodes in the subtree rooted at $X$) from the rest of the graph that contains
% {\sc DA}. We assume that it is unlikely an optimal attack path would enter the
% above region and then leave it.
% With this assumption, the distance matrix becomes a {\em distance vector} of size at most $w$.
% That is, in the complexity notation, $H^{2w^2}$ becomes $H^{2w}$.

\subsection{$-\ln(1-x)$, when restricted to a small interval, can be approximated using
two straight lines (one upper bound and one lower bound)}

\begin{figure}[h]
    \centering
  \includegraphics[width=0.5\linewidth]{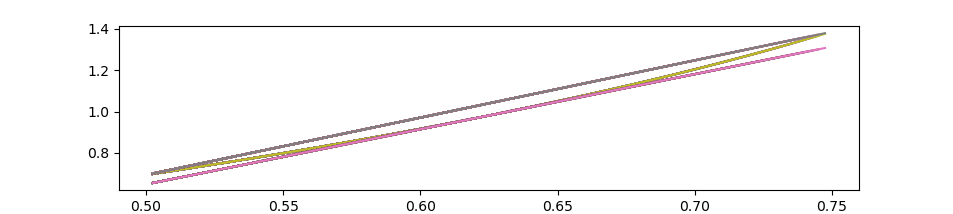}
    \caption{$-\ln(1-x)$ on $[0.5,0.75]$, bounded by straight lines}
  \label{fig:curve}
\end{figure}

\subsection{Reproducibility notes and preprocessing steps}

We have included our source code as well as the synthetic AD graphs.
{\sc R2000} and {\sc R4000} were generated using {\sc DBCreator} by setting
the numbers of computers to $2000$ and $4000$, respectively.
{\sc ADS10} was generated using {\sc adsimulator} with the following parameters:
nComputers=1000,
nUsers=1000,
nOUs=200,
nGroups=1000,
nGPOs=200.

Both tools are open source and are available at:

\noindent https://github.com/BloodHoundAD/BloodHound-Tools/tree/master/DBCreator

\noindent https://github.com/nicolas-carolo/adsimulator

In our experiments, we run $10$ trials with different
random setups (randomly picked blockable edges and entry nodes).  The
trials are using fixed random seeds from $0$ to $9$.
For {\sc MIP-Feasible} and {\sc MIP-LB}, we used $10$ regions to approximate $-\ln(1-x)$: $[0,0.1],[0.1,0.2],\ldots,[0.9,0.99]$.

Before running our algorithms, we perform the following preprocessing steps:

\begin{itemize}

    \item In both {\sc R2000} and {\sc R4000}, there are $7$ admin nodes.
        In {\sc ADS10}, there are $43$ admin nodes (nodes whose {\sc admincount}
        property is true).
        We
        merge them into a single destination node and set it as {\sc DA}.
        We
        ignore all out-going edges of {\sc DA}.  (Once the attacker reaches
        {\sc DA}, the attack has finished.)

    \item We delete all nodes that cannot reach {\sc DA}.

    \item We delete all incoming edges to entry nodes, because the optimal
        attack path will never involve such edges.

    \item We delete non-entry nodes that have $0$ in-degrees, because the
        attacker cannot start from these nodes or reach these nodes.

    \item If an edge is not block-worthy (Lemma~\ref{lm:bw}), then we
        mark it as not blockable.

    \item For a splitting node $u$, if there are two non-splitting paths
        $\textsc{nsp}(u,v_1)$ and $\textsc{nsp}(u,v_2)$ that satisfy: 1)
        these two paths have the same end node ($\textsc{dest}(u,v_1)=\textsc{dest}(u,v_2)$); 2) the first path has a higher success rate ($c_{u,v_1}\ge
        c_{u,v_2}$); 3) The first path $\textsc{nsp}(u,v_1)$ is not blockable, then the
        attacker always prefers the first path (facing any defense).
        We delete $u\rightarrow v_2$ as it is never used.

\end{itemize}

\subsection{Proximal Policy Optimization parameters and implementation details}

We used the PPO implementation from the popular {\sc Tianshou} library, with
the following parameters. Our code is also included.

\begin{footnotesize}
\begin{center}
    \begin{tabular}{ |l|l| }
 \hline
        learning rate & $0.001$ \\
 \hline
        discount factor & $0.99$ \\
 \hline
        batch size & $64$ \\
 \hline
        epoch & refer to experiment description \\
 \hline
        step per epoch & $10000$ \\
 \hline
        step per collect & $2000$ \\
 \hline
        repeat per collect & $10$ \\
 \hline
        hidden sizes & $[128, 128]$ for {\sc R2000} \\
                     & $[512, 512]$ for others \\
 \hline
\end{tabular}
\end{center}
\end{footnotesize}

The observation space is a tuple where every coordinate is between $0$ and $1$.
An observation consists of the following:

\begin{itemize}

    \item
Distance matrix before blocking: The length is the square of the maximum tree decomposition bag size. Every matrix element is expressed in terms of probability so it is between $0$ and $1$. If the current bag size is less than the maximum bag size, then we only
        use the ``top left'' corner. The remaining slots are filled in with $0$s (equivalent to adding auxiliary nodes that not connected to any other nodes).

    \item
Distance matrix after blocking.

    \item Budget spent: if the total budget is $5$ and $2$ is left, then
        we append $(1, 1, 0, 0, 0)$ to the observation.

    \item Current step index: if the total episode length is $5$ and we have
        finished $2$ steps, then we
        append $(1, 1, 0, 0, 0)$ to the observation.

\end{itemize}

\subsection{Experiments showing that our RL based approach is not merely performing
random searching and the tree decomposition based distance matrix indeed provides
a useful representation that facilitates learning}

We conduct experiments by running reinforcement learning on the original
observation space with the tree decomposition based distance matrix.  We then
repeat the experiments using the same setups and the same random seeds, but we
replace the actual distance matrix by the {\bf zero matrix} and by the {\bf
random matrix}.  For a fair comparison, we do not modify the representation for
the budget spent and the current step index.  Our experiments show that
learning significantly deteriorates after the replacements.

We design the following AD graph for our experiments.
Node $0$ is {\sc DA}. Besides {\sc DA}, we have $40$ entry nodes from node $1$ to node
$40$. Node $i$ ($1\le i\le 40$) has one {\em blockable} edge pointing to {\sc DA}.
These are all the edges in the graph.
We set a budget of $20$. That is, we need to block half of the edges.
For $i$ that is {\bf even}, the failure rate of edge $i\rightarrow \textsc{DA}$ is $\frac{40}{41}$, which represents an edge that does not need to be blocked as the failure rate is
high.
For $i$ that is {\bf odd}, the failure rate of edge $i\rightarrow \textsc{DA}$ is
$\frac{i}{41}$ ($\frac{1}{41}\le \frac{i}{41}\le \frac{39}{41}$), which represents an edge that needs to be blocked considering that our budget is $20$.
The design motivation of this AD graph is as follows:

\begin{itemize}

    \item
The task is to pick out $20$ ``correct'' edges to block out of $40$ edges, which
        has a large search space (over $10^{11}$).

    \item
The attacker's best success rate is $\frac{40}{41}$, which happens as long
        as $1\rightarrow \textsc{DA}$ is not blocked. That is, with this
        one mistake, we end up with the worst performance.
The attacker's worst success rate is $\frac{1}{41}$, which only happens
when every edge blocked is ``correct''.
        There is a wide range of possible results between $\frac{1}{41}$ and $\frac{40}{41}$ and it gets {\em increasingly difficult} to achieve better performance.

    \item We {\em intertwine} the block-worthy and not-block-worthy edges so
        that we do not accidentally learn toward policies such as ``spend all budget immediately''
        (if all block-worthy edges have lower indices) or ``spend all budget at
        the end'' (if all not-block-worthy edges have lower indices).
        It should be noted that the edges' indices are not included in the
        observation, so we will not accidentally learn toward ``block odd edges''.

\end{itemize}

We use {\sc Original} to represent the original approach with the
distance matrix intact.
We use {\sc Zero} and {\sc Random} to represent the results after replacing
the distance matrix using the zero matrix and the random matrix
(every matrix element independently drawn from $U(0,1)$).
We record the worst success rate for the attacker after $10$ and $50$ epochs.
Experiments are repeated $10$ times using training seed from $0$ to $9$.
The table cells are the averages and the standard deviations over these $10$ trials.
Clearly, the tree decomposition based distance matrix provides a useful
representation that facilitates learning (smaller numbers are better).

\begin{footnotesize}
\begin{center}
    \begin{tabular}{ |l|l|l|l| }
 \hline
       Epochs & {\sc Original} & {\sc Zero} & {\sc Random}\\
 \hline
        $10$ & $\mathbf{0.298\pm 0.071}$ & $0.492\pm 0.063$ & $0.478\pm 0.048$\\
 \hline
        $50$ & $\mathbf{0.117\pm 0.024}$ & $0.371\pm 0.039$ & $0.415\pm 0.040$\\
 \hline
\end{tabular}
\end{center}
\end{footnotesize}

% 0.2975610190481042 0.07052118481324343
% 0.49268292464518665 0.06341454491143247
% 0.47804874209008863 0.047794781496779286
% 0.117073129935266 0.023897446118888718
% 0.37073177335720137 0.03902444337019937
% 0.4146342352736029 0.039328084791883176

Lastly, we run experiments on {\sc R2000} with budget $5$.
We use the same experimental setup as before (as described in the experiment
section). We randomly draw the
entry nodes and the blockable edges using random seed $0$ to $9$.
Experiments are repeated $10$ times using training seed from $0$ to $9$.
We record the worst success rate for the attacker after $5$ and $10$ epochs (due to time constraint -- as we are running three experiments for $10$ random graphs and for $10$ training seeds, which in total equals $300$ experiments).
For seed $1,4,6,7,8,9$, every trial reaches optimality within $5$ epochs, so
we only present the result for seed $0,2,3,5$.
The numbers do not differ much because 1) the gap between the optimal result and
the greedy result is tiny to begin with; 2) the search space isn't large (given enough time, random searching often can find the optimal result for this graph).
Still, we see that the original approach always produces the best result.

\begin{footnotesize}
\begin{center}
    \begin{tabular}{ |l|l|l|l| }
 \hline
        {\sc R2000}& {\sc Original} & {\sc Zero} & {\sc Random}\\
        $b=5$&&& \\
        $5$ epochs &&& \\
 \hline
        Seed $0$ & $\mathbf{0.521\pm 0}$ & $\mathbf{0.521\pm 0}$ & $\mathbf{0.521\pm 0}$\\
 \hline
        Seed $2$ & $\mathbf{0.521\pm 0}$ & $\mathbf{0.521\pm 0}$ & $\mathbf{0.521\pm 0}$\\
 \hline
        Seed $3$ & $\mathbf{0.406\pm 0.030}$ & $0.416\pm 0.040$ & $0.506\pm 0.021$\\
 \hline
        Seed $5$ & $\mathbf{0.392\pm 0.008}$ & $0.394\pm 0.006$ & $0.446\pm 0.050$\\
 \hline
\end{tabular}
\end{center}
\end{footnotesize}
% seed 0
% 0.521284146052653 0.0
% 0.521284146052653 0.0
% 0.521284146052653 0.0
% seed 1
% 0.5487199922032765 1.1102230246251565e-16
% 0.5487199922032765 1.1102230246251565e-16
% 0.5487199922032765 1.1102230246251565e-16
% seed 2
% 0.521284146052653 0.0
% 0.521284146052653 0.0
% 0.521284146052653 0.0
% seed 3
% 0.40608030933735495 0.029713258399757202
% 0.4159847288039408 0.03961767786634294
% 0.5059200660699565 0.021399963066660012
% seed 4
% 0.521284146052653 0.0
% 0.521284146052653 0.0
% 0.521284146052653 0.0
% seed 5
% 0.3922141531316223 0.007923473478293985
% 0.3941950215011958 0.0059426051087204885
% 0.4456979872036979 0.04952209733292867
\begin{footnotesize}
\begin{center}
    \begin{tabular}{ |l|l|l|l| }
 \hline
        {\sc R2000}& {\sc Original} & {\sc Zero} & {\sc Random}\\
        $b=5$&&& \\
        $10$ epochs &&& \\
 \hline
        Seed $0$ & $\mathbf{0.521\pm 0}$ & $\mathbf{0.521\pm 0}$ & $\mathbf{0.521\pm 0}$\\
 \hline
        Seed $2$ & $\mathbf{0.504\pm 0.043}$ & $0.521\pm 0$ & $0.521\pm 0$\\
 \hline
        Seed $3$ & $\mathbf{0.396\pm 0}$ & $\mathbf{0.396\pm 0}$ & $0.406\pm 0.030$\\
 \hline
        Seed $5$ & $\mathbf{0.388\pm 0.010}$ & $0.390\pm 0.009$ & $0.396\pm 0$\\
 \hline
\end{tabular}
\end{center}
\end{footnotesize}
% 0 0
% 0.521284146052653 0.0
% 0.521284146052653 0.0
% 0.521284146052653 0.0
% 1 1
% 0.5487199922032765 1.1102230246251565e-16
% 0.5487199922032765 1.1102230246251565e-16
% 0.5487199922032765 1.1102230246251565e-16
% 2 3
% 0.3961758898707693 5.551115123125783e-17
% 0.3961758898707693 5.551115123125783e-17
% 0.40608030933735495 0.029713258399757202
% 3 4
% 0.521284146052653 0.0
% 0.521284146052653 0.0
% 0.521284146052653 0.0
% 4 2
% 0.5041860459132885 0.043309126616046204
% 0.521284146052653 0.0
% 0.521284146052653 0.0
% 5 5
% 0.3882524163924753 0.009704233506147833
% 0.39023328476204877 0.009077479245314306
% 0.3961758898707693 5.551115123125783e-17

% import networkx as nx
% from utility import trans

% def gen_random():
%     size = 20
%     DG = nx.DiGraph(DA=0, budget=size)
%     for i in range(1, size * 2 + 1):
%         if i % 2 == 1:
%             weight = trans(i / (size * 2 + 1))
%         else:
%             weight = trans(size * 2 / (size * 2 + 1))
%         DG.add_edge(i, 0, weight=weight, blockable=True)
%         DG.nodes[i]["node_type"] = "S"
%     DG.nodes[0]["node_type"] = "DA"
%     DG.graph["start_nodes"] = tuple(sorted(list(range(1, size * 2 + 1))))
%     DG.graph["split_nodes"] = tuple(sorted([]))
%     return DG

\subsection{Description of the kernel}

Due to space constraint, we did not explicitly mention what the kernel is in our
section on kernelization. Also, for presentation purpose, we want to avoid
confusion by brining up a different graph.
For completeness, the kernel is as follows: Only keep splitting nodes, entry nodes and {\sc DA}.
Delete all original edges. Replace every original non-splitting path by one edge. The edge's success
rate is the unblocked success rate of the original non-splitting path. The edge is blockable
if and only if the original non-splitting is blockable.

\subsection{More on {\sc TDCycle} vs {\sc IP}}

As shown in our experiments, {\sc IP} scales extremely well and performs much
better than {\sc TDCycle} on all three graphs generated using
two different open source AD graph generators.
% We suspect that this is an artifact due to {\sc DBCreator}. As mentioned
% earlier. For the synthetic graphs generated by {\sc DBCreator}, the number of
% non-tree edges often turns out to be surprisingly small after preprocessing (at
% most $39$ for {\sc R4000} over $10$ trials), while the tree width can reach
% $13$ for {\sc R4000} for some trials.
We expect {\sc TDCycle} to perform relatively better if we are dealing with a graph
with a small tree width and a large number of non-splitting paths.
As an example, let us consider the following contrived attack graph.
There is one entry node called {\sc entry} and one {\sc DA}. We insert the
following edges: {\sc entry} $\rightarrow i\rightarrow$ {\sc DA} for $i$ from
$1$ to $n-2$ (that is, there are in total $n$ nodes).
All edges are set to be blockable and have the same failure rates.
This attack graph has a tree width of $2$.
The maximum attack path length $l$ is $2$.
The number of possible success
rates for attack paths $H$ is $4$.
With the above set up, {\sc TDCycle}'s complexity becomes $O(b^2n)$.
On the contrary, for {\sc IP}, we face an integer program with $O(n)$ binary variables.

Lastly, {\sc TDCycle} has two advantages over {\sc IP}:

\begin{itemize}

    \item {\sc TDCycle} is {\em embarrassingly parallel}.
That is, if we throw $100$ CPU cores to it, we expect about $100$ times speed up.
        For all three types
        of nodes (introduce/forget/join), the main calculation is
        a for loop that can be trivially
        made parallel. Given a node $X$, we just need to use a shared memory array to store the result for $DP(X)$. All parallel processes can write to this shared array without worrying about race condition.

        On the contrary, {\sc IP} is difficult to be made parallel.

    \item {\sc TDCycle} does not require that we load the whole model into the
        memory.  Memory paging is convenient under {\sc TDCycle}.  The dynamic
        program is in a bottom-up fashion.  Node $X$ in the tree decomposition
        requires a memory block to store its result $DP(X)$.
        For node $X$, we
        only need to allocate memory for it once we reach it.  Also, after processing
        a node, we can safely delete the memory blocks for the children nodes.

        On the contrary, memory paging is difficult for {\sc IP}.
        % For example, Gurobi is considered a state-of-the-art commercial MIP solver.
        % According to Gurobi's FAQ, ``if the model cannot be solved without exceeding physical memory, this will have a strong negative impact on performance''.

\end{itemize}

% In our experiments, {\sc IP} is based on the highly efficient Gurobi solver and
% {\sc TDCycle} is based on a pure Python implementation.  A
% properly implemented {\sc TDCycle} should be written in a more efficient
% programming language such as C++, support parallel computation and efficient memory
% paging. The above requires significant amount of engineering effort.
% For now, we leave it as a future direction.
% Another direction is to design an alternative synthetic Active Directory
% graph generator, because for now, we only have {\sc DBCreator}.
% Since AD graphs are considered sensitive information, a privacy-preserving generative adversarial networks based approach appears to be promising.

\end{document}